\documentclass[10pt,a4paper]{article}
\usepackage{fullpage}
\usepackage{amsfonts,amsmath,amssymb}
\usepackage{amsthm}
\usepackage{graphicx}
\usepackage{sectsty}
\usepackage{authblk}
\usepackage{empheq}
\usepackage[numbers, sort&compress]{natbib}

\theoremstyle{plain}
\newtheorem{theorem}{Theorem}[section]
\newtheorem{lemma}[theorem]{Lemma}
\newtheorem{proposition}[theorem]{Proposition}

\theoremstyle{definition}
\newtheorem{definition}[theorem]{Definition}
\theoremstyle{remark}

\numberwithin{equation}{section}

\sectionfont{\large}
\newenvironment{acknowledgement}[1][Acknowledgement
]{\begin{trivlist} \item[\hskip \labelsep {\bfseries
#1}]}{\end{trivlist}}

\begin{document}
\title{Topology in One Point Interaction Problem on Extended Non-Local Star Graphs and its Eigenvalues}
\author[1]{Lung-Hui Chen}
\affil[1]{\footnotesize General Education Center, Ming Chi University of Technology, New Taipei City, 24301, Taiwan.}
\maketitle
\begin{abstract}
The author studies the inverse spectral problem of Sturm-Liouville operator on a star-like metric graph. At the vertex of this star-like graph, there are attached $m$ edges that imposed with non-local  Sturm-Liouville operator satisfying some suitable non-local boundary conditions. At the vertex, we consider one point interaction condition to model a metric graph that fixed on the end of edges of the graph. This models the vibration or flux that changes over time that monitored at the vertex which serves as certain control/regulation center. The author shows that the system is solvable under very necessary conditions. It is crucial to recover  the topology of the network/metric graph which the topology is given. To begin the analysis, one constructs the special solution fixed on one end of edges while maintaining continuous at the vertex. This models a string that is vibrating vertically at the vertex according to certain frequencies. The non-local characteristic function plays a role, and then, one tries to find a non-trivial non-local eigenvalue.
\\MSC: 47A55/34A55/34K29.
\\Keywords:  Fourier analysis; metric star graph; inverse spectral problem; network theory; non-local regulator; boundary control problem; non-local eigenvalue.
\end{abstract}

\section{Introduction}
This paper investigates inverse spectral problems for Sturm-Liouville operators on star-shaped metric graphs. A star graph consists of several finite edges joined at a common central vertex and provides a natural mathematical model for wave vibration and transport phenomena in branched network structures. In recent years, Sturm-Liouville operators on metric graphs have received considerable attention because of roles in quantum graph theory, applications to telecommunications, electrical power networks, and transportation systems that connected to networks analysis. In particular, wave vibration and diffusion on such networks can be described by differential operators acting along on the edges, together with a few types of local or non-local boundary conditions and edge potentials. At the central vertex, some types of constraints like frozen-argument condition could be imposed to model a centralized coupling mechanism. 
Further more, the are a few types of local or non-local boundary conditions to construct the non-trivial eigenfunctions.
The non-local boundary conditions are not merely physical features, but also contribute to realizations of certain engineering and physical constraints. In particular, they help explaining the physical principles such as the conservation of energy in form of generalized Kirchhoff's law which express the standard continuity and flux conservation conditions that in turn described by the mathematical uniqueness of solutions of the state. The investigation presented here demonstrates that, under appropriate local or non-local conditions, the inverse spectral problem admits an inverse uniqueness and solvable formulation.
\par
The most interesting aspect of such modeling is its connection to the theory of metric quantum graphs, which studies the physical dynamics of single particle moving on metric graphs governed by various sorts of differential operators \cite{Alb,Bon,Boman,Exner,Freiling,Gerasimenko,Kottos,Kuchment2,Naimark,Nizhnik1,Nizhnik2} to name a few, along with many different kinds of boundary conditions, and often requires the energy conservational laws or even certain commercial requirements in real world scenarios. That is, the boundary condition acts as an integral part of the modeling these specific problems. We are interested to formulate and to solve a few number of direct and inverse spectral problems and scattering problems on quantum graphs \cite{Alb,Alb2,Belishev,Bon,Boman,Carlson,Gerasimenko,Kottos,Kostrykin,Kuchment2,Kurasov1,Kurasov2,Naimark,Nizhnik1,Nizhnik2,Pivovarchik1,Pivovarchik2,Pokornyi,Yurko1,Yurko2} to name a few.
Specifically, in this paper, we study the inverse spectral problems for Schr\"{o}dinger operator with non-local potential function on star graph. Previously, the inverse spectral problem on  finite graph for the Sturm-Liouville problem with non-local potential or with various boundary value condition was considered in \cite{Alb,Alb2,Bon2,Bon,Kostrykin,Kurasov1,Kurasov2,Nizhnik1,Nizhnik2,
Pivovarchik1,Yang,Shieh2}.
\par
We begin with the following star graph $T$: The central vertex of $T$ is located at the origin $0\in\mathbb{R}^{2}$ and connected by 
$m$ edges, each sharing the origin as their common vertex, which we denoted as $v_{0}$.  Each edge $e_{j}$ has known length $l_{j}$, $j=1,2,\ldots,m$ with end points denoted as $v_{j}$, $j=1,2,\ldots,m$, counted anti-clockwise, and the angles between the edges are assumed to be acute and denoted as $\{\theta_{j}\}_{j=1}^{m}$.
\par
On $T$ and its edge $e_{j}$, there exists the function $\psi_{j}(x)\in W_{2}^{2}(0,l_{j})$ such that the function $\psi_{j}(x)$ satisfies the following eigenvalue problem with complex-valued non-local potentials $q_{j}(x)\in L^{2}(0,l_{j})$:
\begin{eqnarray}\label{1.1}
-\frac{d^{2}}{dx^{2}}\psi_{j}(x)+q_{j}(x)\psi_{j}(0)=\lambda\psi_{j}(x),\,0<x<l_{j},\,j=1,2,\ldots,m,
\end{eqnarray}
in which $m\geq3$, $\lambda=z^{2}\in\mathbb{C}$ is the spectral parameter,
and it is imposed with the boundary conditions
\begin{eqnarray}\label{1.2}
&&\psi_{j}(l_{j})=0,\,j=1,2,\ldots,m;\\
&&\psi_{1}(0)=\psi_{2}(0)=\cdots=\psi_{m}(0),\label{1.22}
\end{eqnarray}
and the non-local boundary value condition
\begin{equation}\label{1.3}
\sum_{j=1}^{m}\big\{\frac{d}{dx}\psi_{j}(0)-\int_{0}^{l_{j}}\psi_{j}(x)\overline{q_{j}(x)}dx\big\}=0,
\end{equation}
which serves a generalized Kirchhoff's law as a non-local conservation law, and the solution $\psi_{j}(x,z)$ depends on $(x,z)\in\mathbb{R}\times\mathbb{C}$ for each $j$.
We note that the inverse spectral problems is posed with a frozen argument at the vertex that is called a non-local point interaction in quantum mechanics, and  the generalized Kirchhoff's law reads as
\begin{equation}\label{1.4}
\sum_{j=1}^{m}\big\{\frac{d}{dx}\psi_{j}(0,z)-\int_{0}^{l_{j}}\psi_{j}(x,z)\overline{q_{j}(x)}dx\big\}=0,\,z\in\mathbb{C}.
\end{equation}
In this paper, we use the non-local spectral data to recover the information on angular information between the edges $$\{e_{1},e_{2},\cdots,e_{m}\}.$$
To this end, we consider the extended closed graph from the star graph $T$, which is denoted as $\overline{T}$ obtained by connecting the end point $v_{j}$, $j=1,2,\ldots,m$, and set $v_{m+1}=v_{1}$. Let us denote the angle between $e_{j}$ and $e_{j+1}$ as $\theta_{j}$, and the angle between $e_{m}$ and $e_{1}$ as $\theta_{m}$. Moreover, we call the new edges between the vertices $v_{j}$ and $v_{j+1}$ as $\bar{e}_{j}$ with length $\bar{l}_{j}$, and the edge between $v_{m}$ and $v_{1}$ as $\bar{e}_{m}$ with length $\bar{l}_{m}$. 
We set $\cup_{j=1}^{m}\bar{e}_{j}$ to be a cyclic metric graph by taking $v_{m+1}=v_{1}$ that moving around $v_{0}$.
Surely, $v_{j},\,j\geq1,$ is a vertex attached with edge $e_{j}$, $\bar{e}_{j-1}$, and $\bar{e}_{j+1}$. The extended graph $\overline{T}$ is trivially unique.
\par
To begin with the analysis on $\overline{T}$, we modify the differential system~(\ref{1.1}),~(\ref{1.2}),~(\ref{1.22}), and~(\ref{1.4}) for the presence of  $\bar{e}_{j}$, $j=1,2,\cdots,m.$ For $z\in\mathbb{C}$,
\begin{empheq}[left=\text{$(\mathcal{D})$ }\empheqlbrace]{align}
-\frac{d^{2}}{dx^{2}}\psi_{j}(x,z)+q_{j}(x)\psi_{j}(0,z)=z^{2}\psi_{j}(x,z),\,x\in e_{j},\,j=1,2,\ldots,m;\tag{1.6a}\label{1.6a}
\\\tag{1.6b}\label{1.6b}
-\frac{d^{2}}{dx^{2}}\bar{\psi}_{j}(x,z)=z^{2}\bar{\psi}_{j}(x,z),\,x\in \bar{e}_{j},\,j=1,2,\ldots,m;\\
\tag{1.6c}\label{1.6d}
\psi_{1}(0,z)=\psi_{2}(0,z)=\cdots=\psi_{m}(0,z);\\
\tag{1.6d}\label{1.6e}
\psi_{j}(l_{j}^{-},z)=0,\,j=1,2,\cdots,m;\\
\nonumber
\sum_{j=1}^{m}\frac{d}{dx}\psi_{j}(0^{+},z)+\frac{d}{dx}\psi_{j}(l_{j}^{-},z)+\frac{d}{dx}\bar{\psi}_{j-1}(\bar{l}_{j-1}^{-},z)+\frac{d}{dx}\bar{\psi}_{j}(0^{+},z)\\\hspace{80pt}-\int_{0}^{l_{j}}\psi_{j}(x,z)\overline{q_{j}(x)}dx=0,\tag{1.6e}\label{1.6f}
\end{empheq}
in which the Dirichlet problem~(\ref{1.6b}) and~(\ref{1.6e}) on the extended edge $\bar{e}_{j}$ defines the special solution $\bar{\psi}_{j}(x,z)$  as $z$ regarded as a fixed parameter in $\mathbb{C}$. Moreover, in~(\ref{1.6f}),$$\frac{d}{dx}\psi_{j}(l_{j}^{-},z)+\frac{d}{dx}\bar{\psi}_{j-1}(\bar{l}_{j-1}^{-},z)+\frac{d}{dx}\bar{\psi}_{j}(0^{+},z)$$ prepares for non-local Kirchhoff's condition at vertex $v_{j}$ on $\cup_{j=1}^{m}\bar{e}_{j}$, and the equation~(\ref{1.6f}) is the non-local Kirchhoff's condition in the system $(\mathcal{D})$ which is an entire function of finite type that its zeros define the set of eigenvalues. Thus, one needs to elaborate on the connection between the zero set of~(\ref{1.6b}),~(\ref{1.6e}) and~(\ref{1.6f}). The focus is the analysis of discrete set outside the zeros of $$\prod_{k=1}^{m}\sin{z l_{k}}\prod_{k=1}^{m} \sin{z \bar{l}_{k}}.$$
\par
To retrieve the topological information on $T$, we aim to recover the information on $\theta_{j}$, $j=1,2,\ldots,m$, with a knowledge of $q_{j}(x)$ on $L^{2}(0,l_{j})$, $j=1,2,\ldots,m$.

\section{Compatibility and Characteristic Functions}
In this section, we construct the special solutions of system $(\mathcal{D})$.
We start with the Fourier-sine series in $L^{2}(0,l_{j})$:
$$q_{j}(x)=\sum_{n=1}^{\infty}q_{j,n}\sin[\frac{n\pi}{l_{j}} (l_{j}-x)]$$ with Fourier-sine transform coefficients
\begin{eqnarray}\label{2.1}
&&\mathcal{F}(q_{j})(z)=\frac{2}{l_{j}}\int_{0}^{l_{j}}q_{j}(x)\sin[\frac{z\pi}{l_{j}} (l_{j}-x)]dx;\\
&&q_{j,n}:=\frac{2}{l_{j}}\int_{0}^{l_{j}}q_{j}(x)\sin[\frac{n\pi}{l_{j}} (l_{j}-x)]dx.
\end{eqnarray}
Firstly, we consider the special solutions of system $(\mathcal{D})$ with $\lambda=z^{2}$. Modifying Nizhnik's construction of special solutions as in \cite{Alb,Alb2,Nizhnik1,Nizhnik2}, we choose
\begin{equation} \label{2.2}
\phi_{j}(x;z)=\Big(\sin{z(l_{j}-x)}+\sin{z l_{j}}\sum_{n=1}^{\infty}\frac{q_{j,n}\sin[\frac{n\pi}{l_{j}}(l_{j}-x)]}{z^{2}-(\frac{n\pi}{l_{j}})^{2}}\Big)\prod_{k\neq j}\sin{z l_{k}}\prod_{k=1}^{m}\sin{z \bar{l}_{k}},\,x\in T,
\end{equation}
and then verify that
\begin{eqnarray}\label{2.4}
&&\phi_{1}(0;z)=\phi_{2}(0;z)=\cdots=\phi_{m}(0;z)=\prod_{k=1}^{m}\sin{z l_{k}}\prod_{k=1}^{m} \sin{z \bar{l}_{k}};
\\&&\label{2.5}
\phi_{1}(l_{1};z)=\phi_{2}(l_{2};z)=\cdots=\phi_{m}(l_{m};z)=0.
\end{eqnarray}
Thus, one point interaction at $v_{0}=0$ in~(\ref{1.6d}) is satisfied by~(\ref{2.4}) for $z\in\mathbb{C}$.
Moreover,
\begin{equation}\label{3333}
\phi_{j}'(x;z)=\Big(-z\cos{z(l_{j}-x)}-\sin{z l_{j}}\sum_{n=1}^{\infty}\frac{n\pi}{l_{j}}\frac{q_{j,n}\cos[\frac{n\pi}{l_{j}}(l_{j}-x)]}{z^{2}-(\frac{n\pi}{l_{j}})^{2}}\Big)\prod_{k\neq j}\sin{z l_{k}}\prod_{k=1}^{m}\sin{z \bar{l}_{k=1}},
\end{equation} 
and then
\begin{eqnarray}\label{2.3}
\phi_{j}'(0;z)=\Big(-z\cos{zl_{j}}-\sin{z l_{j}}\sum_{n=1}^{\infty}\frac{(-1)^{n}n\pi}{l_{j}}\frac{q_{j,n}}{z^{2}-(\frac{n\pi}{l_{j}})^{2}}\Big)\prod_{k\neq j}\sin{z l_{k}}\prod_{k=1}^{m}\sin{z \bar{l}_{k=1}},\,j=1,2,\ldots,m;
\end{eqnarray}
\begin{equation}
\phi_{j}'(l_{j};z)=\Big(-z-\sin{z l_{j}}\sum_{n=1}^{\infty}\frac{(-1)^{n}n\pi}{l_{j}}\frac{q_{j,n}}{z^{2}-(\frac{n\pi}{l_{j}})^{2}}\Big)\prod_{k\neq j}\sin{z l_{k}}\prod_{k=1}^{m}\sin{z \bar{l}_{k=1}}.
\end{equation} 

To prepare Kirchhoff's condition, we sum over $j$-th summands.
\begin{eqnarray}
\sum_{j=1}^{m}\phi_{j}'(0;z)=-\sum_{j=1}^{m}\Big(z\cos{zl_{j}}+\sin{z l_{j}}\sum_{n=1}^{\infty}\frac{(-1)^{n}n\pi}{l_{j}}\frac{q_{j,n}}{z^{2}-(\frac{n\pi}{l_{j}})^{2}}\Big)\prod_{k\neq j}\sin{z l_{k}}\prod_{k=1}^{m}\sin{z \bar{l}_{k=1}},\label{25}
\end{eqnarray}
and
\begin{equation}
\sum_{j=1}^{m}\phi_{j}'(l_{j};z)=-\sum_{j=1}^{m}\Big(z+\sin{z l_{j}}\sum_{n=1}^{\infty}\frac{(-1)^{n}n\pi}{l_{j}}\frac{q_{j,n}}{z^{2}-(\frac{n\pi}{l_{j}})^{2}}\Big)\prod_{k\neq j}\sin{z l_{k}}\prod_{k=1}^{m}\sin{z \bar{l}_{k=1}}.
\end{equation} 
On the presence of $\{\bar{e}_{j}\}_{j=1}^{m}$ and $\{v_{j}\}_{j=1}^{m}$ on $\overline{T}$, we choose the special solutions on each $\bar{e}_{j}$:
\begin{equation}
\bar{\phi}_{j}(x;z)=\sin{z(\bar{l}_{j}-x)}\prod_{k=1}^{m}\sin{z l_{k}}\prod_{k\neq j}\sin{z \bar{l}_{k=1}},\,0\leq x\leq\bar{l}_{j},
\end{equation}
and
\begin{equation}
\bar{\phi}_{j}'(x;z)=-z\cos{z(\bar{l}_{j}-x)}\prod_{k=1}^{m}\sin{z l_{k}}\prod_{k\neq j}\sin{z \bar{l}_{k=1}},\,0\leq x\leq\bar{l}_{j}.\label{555}
\end{equation}
To prepare the generalized Kirchhoff condition, we use flux~(\ref{3333}) and~(\ref{555}) at vertex $v_{j}$, $j=1,2,\cdots,m,$
to deduce that
\begin{eqnarray}\label{2133}\nonumber
&&\hspace{12pt}\frac{d}{dx}\phi_{j}(l_{j}^{-},z)+\frac{d}{dx}\bar{\phi}_{j-1}(\bar{l}_{j-1}^{-},z)+\frac{d}{dx}\bar{\phi}_{j}(0^{+},z)\\
&&\nonumber=-\Big(z+\sin{z l_{j}}\sum_{n=1}^{\infty}\frac{(-1)^{n}n\pi}{l_{j}}\frac{q_{j,n}}{z^{2}-(\frac{n\pi}{l_{j}})^{2}}\Big)\prod_{k\neq j}\sin{z l_{k}}\prod_{k=1}^{m}\sin{z \bar{l}_{k}}\\
&&-\Big(z+z\cos{z\bar{l}_{j}}\Big)\prod_{k=1}^{m}\sin{z l_{k}}\prod_{k\neq j}\sin{z \bar{l}_{k=1}}.\label{288}
\end{eqnarray}
Secondly, we recall the non-local condition in~(\ref{1.6f}) on $T$,
\begin{eqnarray}\nonumber
&&\int_{0}^{l_{j}}\phi_{j}(x;z)\overline{q_{j}(x)}dx\\&=&\Big(\int_{0}^{l_{j}}\sin{z(l_{j}-x)}\overline{q_{j}(x)}dx+\sin{z l_{j}}\sum_{n=1}^{\infty}\frac{q_{j,n}\int_{0}^{l_{j}}\sin[\frac{n\pi}{l_{j}}(l_{j}-x)]\overline{q_{j}(x)}dx}{z^{2}-(\frac{n\pi}{l_{j}})^{2}}\Big)\prod_{k\neq j}\sin{z l_{k}}\prod_{k=1}^{m}\sin{z \bar{l}_{k}}.\nonumber
\end{eqnarray}
Then, on the edges $\{e_{j}\}$, we consider the summation
\begin{eqnarray}\label{29}
&&\sum_{j=1}^{m}\int_{0}^{l_{j}}\phi_{j}(x;z)\overline{q_{j}(x)}dx\\\hspace{-5pt}&=&\hspace{-5pt}\sum_{j=1}^{m}\Big(\int_{0}^{l_{j}}\sin{z(l_{j}-x)}\overline{q_{j}(x)}dx+\sin{z l_{j}}\sum_{n=1}^{\infty}\frac{q_{j,n}\int_{0}^{l_{j}}\sin[\frac{n\pi}{l_{j}}(l_{j}-x)]\overline{q_{j}(x)}dx}{z^{2}-(\frac{n\pi}{l_{j}})^{2}}\Big)\prod_{k\neq j}\sin{z l_{k}}\prod_{k=1}^{m}\sin{z \bar{l}_{k}}.\nonumber\end{eqnarray}
Therefore, the equation~(\ref{25}),~(\ref{2133}), and~(\ref{29}) add up to be the characteristic function in this paper:
\begin{eqnarray}\nonumber
&&\Phi(z)\underset{def}{=}\sum_{j=1}^{m}\Big(\int_{0}^{l_{j}}\sin{z(l_{j}-x)}\overline{q_{j}(x)}dx+\sin{z l_{j}}\sum_{n=1}^{\infty}\frac{q_{j,n}\overline{q_{j,n}}}{z^{2}-(\frac{n\pi}{l_{j}})^{2}}\Big)\prod_{k\neq j}\sin{z l_{k}}\prod_{k=1}^{m}\sin{z \bar{l}_{k}}\\\nonumber
&&\nonumber+\sum_{j=1}^{m}\Big( 2z+z\cos{zl_{j}}+2\sin{z l_{j}}\sum_{n=1}^{\infty}\frac{(-1)^{n}n\pi}{l_{j}}\frac{q_{j,n}}{z^{2}-(\frac{n\pi}{l_{j}})^{2}}\Big)\prod_{k\neq j}\sin{z l_{k}}\prod_{k=1}^{m}\sin{z \bar{l}_{k}}\\
&&
+\sum_{j=1}^{m}\Big(z+z\cos{z\bar{l}_{j}}\Big)\prod_{k=1}^{m}\sin{z l_{k}}\prod_{k\neq j}\sin{z \bar{l}_{k=1}}.\label{Phi}
\end{eqnarray}

\section{Lemmata and Results}
We need to show the system $(\mathcal{D})$ is well-posed firstly, and the set of eigenvalues is not empty.
\begin{definition}
We say real numbers $\{a_{1},\,a_{2},\ldots, a_{m}\},\,m\geq 2$, are called rationally independent if the identity $ \sum_{j=1}^{m}n_{j} a_{j}=0$,  for $n_{j}\in\mathbb{Z}$, implies that $$n_{1}=n_{2}=\cdots=n_{m}=0.$$
\end{definition}
\begin{lemma}\label{3.2}
Let us assume that $\{l_{j},\bar{l}_{j}\}_{j=1}^{m}$ are mutually independent. Then, for  $z\in\frac{\mathbb{Z}\pi}{l_{j}},\,j=1,2,\ldots,m,$  $\phi_{j}(x;z)$ satisfies~(\ref{1.1}),~(\ref{1.2}), and~(\ref{1.22}).
\end{lemma}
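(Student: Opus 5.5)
The plan is to verify the three conditions by direct computation on the explicit special solution~(\ref{2.2}). I will first do this for $z$ off the set $\frac{\mathbb{Z}\pi}{l_{j}}$, and then pass to the points $z\in\frac{\mathbb{Z}\pi}{l_{j}}$ by analytic continuation in $z$. The rational independence of $\{l_{j},\bar{l}_{j}\}$ is used only to ensure that the solution obtained is nontrivial, i.e.\ that the products $\prod_{k\neq j}\sin z l_{k}\prod_{k}\sin z\bar{l}_{k}$ do not vanish at $z=\frac{n\pi}{l_{j}}$, $n\neq 0$. This holds because $\frac{n\pi l_{k}}{l_{j}}\notin\pi\mathbb{Z}$ and $\frac{n\pi\bar{l}_{k}}{l_{j}}\notin\pi\mathbb{Z}$ under independence.

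\emph{Step 1 (regularity).} I will show that the series in~(\ref{2.2}) converges in $W_{2}^{2}(0,l_{j})$ for each fixed $z$ with $z^{2}\neq(\frac{n\pi}{l_{j}})^{2}$. After two $x$-derivatives, the coefficients are $\frac{(n\pi/l_{j})^{2}q_{j,n}}{z^{2}-(n\pi/l_{j})^{2}}$, which are $O(|q_{j,n}|)$. Since $\{q_{j,n}\}\in\ell^{2}$ because $q_{j}\in L^{2}(0,l_{j})$, the differentiated series converges in $L^{2}$, so $\phi_{j}(\cdot;z)\in W_{2}^{2}(0,l_{j})$.

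\emph{Step 2 (the equation).} I will differentiate termwise. The leading term satisfies $-\partial_{x}^{2}\sin z(l_{j}-x)=z^{2}\sin z(l_{j}-x)$. Each series term contributes
\[
\bigl(\tfrac{n\pi}{l_{j}}\bigr)^{2}-z^{2}=-\bigl(z^{2}-(\tfrac{n\pi}{l_{j}})^{2}\bigr).
\]
Hence $-\phi_{j}''-z^{2}\phi_{j}=-\sin z l_{j}\,q_{j}(x)\cdot\prod_{k\neq j}\sin zl_{k}\prod_{k}\sin z\bar{l}_{k}$, using the Fourier-sine expansion of $q_{j}$. On the other hand, (\ref{2.4}) gives $q_{j}(x)\phi_{j}(0;z)=q_{j}(x)\sin zl_{j}\prod_{k\neq j}\sin zl_{k}\prod_{k}\sin z\bar{l}_{k}$. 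These two terms cancel exactly, which is~(\ref{1.1}).

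\emph{Step 3 (boundary and vertex conditions).} Condition~(\ref{1.2}) is immediate, since every sine $\sin[\cdot(l_{j}-x)]$ vanishes at $x=l_{j}$; this is~(\ref{2.5}). Condition~(\ref{1.22}) is~(\ref{2.4}), because all $\phi_{k}(0;z)$ equal the common symmetric product $\prod_{k}\sin zl_{k}\prod_{k}\sin z\bar{l}_{k}$. At $z\in\frac{\mathbb{Z}\pi}{l_{j}}$ this common value is $0$, so continuity at $v_{0}$ holds trivially.

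\emph{Step 4 (the resonant points; main obstacle).} The delicate step is $z=\frac{n_{0}\pi}{l_{j}}$, where the $n_{0}$-th denominator vanishes. I will observe that $\frac{\sin zl_{j}}{z^{2}-(n_{0}\pi/l_{j})^{2}}$ has a removable singularity there, with finite limit $\frac{(-1)^{n_{0}}l_{j}^{2}}{2n_{0}\pi}$. All other terms have $\sin zl_{j}\to0$ with nonvanishing denominators. Thus $\phi_{j}(\cdot;z)$ extends to an entire function of $z$ with values in $W_{2}^{2}$. The bounds of Step~1 are locally uniform in $z$ once the $n_{0}$-th term is handled separately, so the convergence is uniform on a neighbourhood of $z$. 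The identities of Steps~2--3 then pass to the limit, because both sides are continuous in $z$ with values in $L^{2}$. In the limit the extra resonant term $q_{j,n_{0}}\frac{(-1)^{n_{0}}l_{j}^{2}}{2n_{0}\pi}\sin[\frac{n_{0}\pi}{l_{j}}(l_{j}-x)]$ survives. This term is itself a Dirichlet eigenfunction at $x=l_{j}$ with eigenvalue $z^{2}$. Since $q_{j}\phi_{j}(0)=0$ there, the result is consistent with the direct check that $-\phi_{j}''=z^{2}\phi_{j}$ and $\phi_{j}(l_{j})=0$.
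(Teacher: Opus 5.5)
Your proof is correct, and its core is the paper's own direct computation: differentiate the explicit series (\ref{2.2}) twice, compare with $q_{j}(x)\phi_{j}(0;z)$, and take (\ref{1.2}) and (\ref{1.22}) from (\ref{2.5}) and (\ref{2.4}).

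Where you differ is in how the points $z\in\frac{\mathbb{Z}\pi}{l_{j}}$ are reached. The paper substitutes these $z$ directly. It writes $-\phi_{j}''$ with $(\frac{n\pi}{l_{j}})^{2}$ pulled outside the sum, sets $q_{j}(x)\phi_{j}(0;z)=0$ (crediting rational independence), and reads off $-\phi_{j}''=(\frac{n\pi}{l_{j}})^{2}\phi_{j}$. It never addresses that the resonant term of (\ref{2.2}) has the form $0/0$ there. You instead prove the exact cancellation $-\phi_{j}''-z^{2}\phi_{j}=-q_{j}(x)\phi_{j}(0;z)$ at every non-resonant $z$, using $W_{2}^{2}$ convergence to justify termwise differentiation. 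You then pass to the resonant points through the continuous $W_{2}^{2}$-valued extension and identify the surviving term explicitly.

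This gains both rigor and generality. Conditions (\ref{1.1}), (\ref{1.2}) and (\ref{1.22}) hold for all $z\in\mathbb{C}$. Moreover, $\phi_{j}(0;z)=0$ on $\frac{\mathbb{Z}\pi}{l_{j}}$ follows from $\sin zl_{j}=0$ alone, so, as you note, the independence hypothesis is not needed for the lemma's conclusion.

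One small correction to your side remark on nontriviality. At $z=\frac{n_{0}\pi}{l_{j}}$, $n_{0}\geq1$, independence makes only the prefactor $\prod_{k\neq j}\sin zl_{k}\prod_{k}\sin z\bar{l}_{k}$ nonzero. The bracket there equals $\bigl(1+\frac{(-1)^{n_{0}}l_{j}^{2}}{2n_{0}\pi}q_{j,n_{0}}\bigr)\sin[\frac{n_{0}\pi}{l_{j}}(l_{j}-x)]$, which vanishes for one exceptional value of $q_{j,n_{0}}$. So nontriviality is not guaranteed in general; this is harmless, since the lemma does not assert it.
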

\begin{proof}
It is sufficient to proceed with~(\ref{2.2}):
\begin{equation} \nonumber
\phi_{j}(x;z)=\Big(\sin{z(l_{j}-x)}+\sin{z l_{j}}\sum_{n=1}^{\infty}\frac{q_{j,n}\sin[\frac{n\pi}{l_{j}}(l_{j}-x)]}{z^{2}-(\frac{n\pi}{l_{j}})^{2}}\Big)\prod_{k\neq j}\sin{z l_{k}}\prod_{k=1}^{m}\sin{z \bar{l}_{k}},
\end{equation}
to obtain
\begin{equation}
-\phi''_{j}(x;z)=\Big(z^{2}\sin{z(l_{j}-x)}+(\frac{n\pi}{l_{j}})^{2}\sin{z l_{j}}\sum_{n=1}^{\infty}\frac{q_{j,n}\sin[\frac{n\pi}{l_{j}}(l_{j}-x)]}{z^{2}-(\frac{n\pi}{l_{j}})^{2}}\Big)\prod_{k\neq j}\sin{z l_{k}}\prod_{k=1}^{m}\sin{z \bar{l}_{k}},\label{331}
\end{equation}
and
\begin{eqnarray}\nonumber
q_{j}(x)\phi_{j}(0;z)&=&q_{j}(x)\Big(\sin{zl_{j}}+\sin{z l_{j}}\sum_{n=1}^{\infty}\frac{q_{j,n}\sin[n\pi]}{z^{2}-(\frac{n\pi}{l_{j}})^{2}}\Big)\prod_{k\neq j}\sin{z l_{k}}\prod_{k=1}^{m}\sin{z \bar{l}_{k}}\\&=&q_{j}(x)\prod_{k=1}^{m}\sin{z l_{k}}\prod_{k=1}^{m}\sin{z \bar{l}_{k}}.\label{332}
\end{eqnarray}
Hence, we deduce
\begin{equation}
(\frac{n\pi}{l_{j}})^{2}\phi_{j}(x;z)=(\frac{n\pi}{l_{j}})^{2}\Big(\sin{z(l_{j}-x)}+\sin{z l_{j}}\sum_{n=1}^{\infty}\frac{q_{j,n}\sin[\frac{n\pi}{l_{j}}(l_{j}-x)]}{z^{2}-(\frac{n\pi}{l_{j}})^{2}}\Big)\prod_{k\neq j}\sin{z l_{k}}\prod_{k=1}^{m}\sin{z \bar{l}_{k}},\label{333}
\end{equation}
and
\begin{equation}\label{0.3}
-\phi''_{j}(x;z)+q_{j}(x)\phi_{j}(0;z)=(\frac{n\pi}{l_{j}})^{2}\phi_{j}(x;z),\,z\in\frac{\mathbb{Z}\pi}{l_{j}},
\end{equation}
since $q_{j}(x)\phi_{j}(0;z)=0$ for $z\in\frac{\mathbb{Z}\pi}{l_{j}}$ by the rational independence assumption. Thus, the lemma is proven. 

\end{proof}
From~(\ref{0.3}), we realize that this is merely a degenerate case. We are interested to find out the eigenvalues of $(\mathcal{D})$ outside the zeros of 
$\prod_{k=1}^{m}\sin{z l_{k}}\prod_{k=1}^{m}\sin{z \bar{l}_{k}}$.
\par
For reader's convenience, we include the following classical lemma due to E. C. Titchmarsh \cite{Titchmarsh}, and a modern proof can be found in \cite{Tang}.
\begin{lemma}[Titchmarsh]\label{L2.1}
Let $u\in\mathcal{E}'(\mathbb{R})$, then
$$N_{\mathcal{F}(u)}(r)= \frac{|\mbox{ch supp } u|}{\pi}\big(r + o(1)\big)$$ and $$N_{f}(r)=\sum_{|z|\leq r}\frac{1}{2\pi}\oint_{z}\frac{f'(\omega)}{f(\omega)}d\omega,$$
and the phrase $''\mbox{ch supp }u''$ means the convex hull of the effective support of $u$. 
\par
Moreover, $N_{f}(r)$ is the counting function of the zeros of $f$ inside a ball of radius $r$. We count the zeros according to their multiplicities.
\end{lemma}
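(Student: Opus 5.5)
The plan is to reduce the statement to the classical Cartwright--Levinson theory of entire functions of exponential type. Set $f=\mathcal{F}(u)$. After a translation, which multiplies $f$ by a zero-free exponential factor $e^{i c z}$ and does not change $N_{f}$, we may assume that $\mbox{ch supp }u=[a,b]$ with $a\le 0\le b$. We may also assume $u\neq0$, since otherwise there is nothing to prove.

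First, I would apply the Paley--Wiener--Schwartz theorem, which shows that $f$ is entire and satisfies
\begin{equation}\nonumber
|f(z)|\le C(1+|z|)^{N}e^{H(\operatorname{Im} z)},
\end{equation}
where $H$ is the support function of $[a,b]$: $H(\eta)=b\eta$ for $\eta\ge0$ and $H(\eta)=-a\eta$ for $\eta\le 0$. The sign convention depends on the normalization of $\mathcal{F}$; only the two one-sided types matter.

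Two consequences follow.
\begin{itemize}
\item[(i)] $f$ has exponential type. On the real axis it has at most polynomial growth, so
\begin{equation}\nonumber
\int_{\mathbb{R}}\frac{\log^{+}|f(x)|}{1+x^{2}}\,dx<\infty,
\end{equation}
and $f$ belongs to the Cartwright class.
\item[(ii)] The indicator $h_{f}(\theta)=\limsup_{r\to\infty}r^{-1}\log|f(re^{i\theta})|$ satisfies $h_{f}(\pi/2)\le b$ and $h_{f}(-\pi/2)\le -a$.
\end{itemize}

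Second, I would prove the reverse inequalities $h_{f}(\pi/2)=b$ and $h_{f}(-\pi/2)=-a$. This is the step where the convex hull of the support genuinely enters, and I expect it to be the main obstacle.

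I would first regularize. Replace $u$ by $u*\rho_{\varepsilon}$ with a mollifier $\rho_{\varepsilon}$ supported in $[-\varepsilon,\varepsilon]$. The transform of $u*\rho_{\varepsilon}$ is $f\hat\rho_{\varepsilon}$, and $\hat\rho_{\varepsilon}$ has type at most $\varepsilon$ on each side.

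Suppose, for contradiction, that $h_{f}(\pi/2)<b$. By Phragm\'en--Lindel\"of in the upper half-plane, $f\hat\rho_{\varepsilon}$ is then bounded by $C_{\delta}e^{(b-\delta)\operatorname{Im} z}$ there, for some $\delta>0$ and all small $\varepsilon$. It is also $L^{2}$ on the real line after multiplication by a polynomial inverse. The $L^{2}$ Paley--Wiener theorem then forces $\operatorname{supp}(u*\rho_{\varepsilon})\subset(-\infty,b-\delta+\varepsilon]$. Letting $\varepsilon\to0$ contradicts $b\in\operatorname{supp} u$. The lower half-plane and the endpoint $a$ are handled symmetrically.

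Third, I would invoke the Cartwright--Levinson theorem: for $f$ in the Cartwright class, the zeros have a density, and
\begin{equation}\nonumber
\lim_{r\to\infty}\frac{N_{f}(r)}{r}=\frac{h_{f}(\pi/2)+h_{f}(-\pi/2)}{\pi}.
\end{equation}
The zeros of $f$ in the right and left half-planes each have density $(b-a)/(2\pi)$, and the argument $\frac{1}{2\pi}\oint f'/f$ counts zeros with multiplicity, as in the definition of $N_{f}$. Combined with the second step, this limit equals $\frac{b-a}{\pi}=\frac{|\mbox{ch supp }u|}{\pi}$. Hence $N_{f}(r)=\frac{|\mbox{ch supp }u|}{\pi}\,(r+o(1))$, which is the claim.

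If one prefers not to quote Levinson's theorem, a fallback is Jensen's formula on large discs. The required asymptotic $\frac{1}{2\pi}\int_{0}^{2\pi}\log|f(re^{i\theta})|\,d\theta\sim\frac{r}{\pi}\int_{0}^{2\pi}h_{f}(\theta)\,d\theta$ would then have to be justified using Cartwright's regular growth. This is essentially the content of Levinson's theorem, so I would rely on the cited reference \cite{Tang} for it.
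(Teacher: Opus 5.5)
The paper gives no argument of its own for this lemma; it only cites Titchmarsh and the Tang--Zworski notes. Your outline supplies a real proof along the standard modern route, and it is sound. The Paley--Wiener--Schwartz bound puts $f=\mathcal{F}(u)$ in the Cartwright class. Your mollification argument then shows that $h_{f}(\pi/2)+h_{f}(-\pi/2)=b-a$. To make it precise, apply Phragm\'en--Lindel\"of in each quadrant of the upper half-plane, then shift the Fourier inversion contour upward for $u*\rho_{\varepsilon}$; the rapid decay of $\hat\rho_{\varepsilon}$ absorbs the polynomial factor. Finally, the Cartwright--Levinson theorem turns the width of the indicator diagram into the zero density. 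Compared with the bare citation, your route shows exactly where the convex hull enters and handles genuine distributions directly, whereas Titchmarsh's 1926 theorem is formulated for integrable densities $\int_{a}^{b}\phi(t)e^{izt}\,dt$. The price is that Levinson's theorem is itself a zero-density theorem of the same depth. So the only step you actually prove is the sharpness of the Paley--Wiener--Schwartz exponents.

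Some small repairs are needed. For $\eta\le 0$ the support function is $H(\eta)=a\eta$, not $-a\eta$; your bound $h_{f}(-\pi/2)\le -a$ is nevertheless correct. In the Jensen fallback, the circular mean of $\log|f(re^{i\theta})|$ is $\frac{r}{2\pi}\int_{0}^{2\pi}h_{f}(\theta)\,d\theta+o(r)$, not $\frac{r}{\pi}\int_{0}^{2\pi}h_{f}$. You would also need a Tauberian step, using monotonicity of $n(t)$, to pass from $\int_{0}^{r}n(t)t^{-1}\,dt\sim\frac{b-a}{\pi}r$ to $n(r)\sim\frac{b-a}{\pi}r$. The argument principle carries the factor $\frac{1}{2\pi i}$.

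Most importantly, your final ``hence'' overreaches. From $\lim_{r\to\infty}N_{f}(r)/r=(b-a)/\pi$ you get $N_{f}(r)=\frac{|\mbox{ch supp }u|}{\pi}\,r\,(1+o(1))$, which is the intended statement. The literal form $\frac{|\mbox{ch supp }u|}{\pi}(r+o(1))$ is false. For $u=\mathbf{1}_{[-1,1]}$ one has, up to normalization, $\mathcal{F}(u)(z)=2\sin z/z$. Then $N(r)-\frac{2}{\pi}r=2(\lfloor r/\pi\rfloor-r/\pi)$, which does not tend to $0$. You should say explicitly that you are proving the $r(1+o(1))$ version.
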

\begin{definition}
We denote the following quantity as the zero density of an entire function $f$ of finite type.
\begin{equation}
\delta(f):=\lim_{r\rightarrow\infty}\frac{N_{f}(r)}{r}.
\end{equation}
\end{definition}
\begin{lemma}\label{35}
Let $f$, $g$ be two completely regular growth entire functions of finite type with density functions $\delta(f)$ and
$\delta(g)$ respectively. Then, 
\begin{equation}\nonumber
\delta(fg)=\delta(f)+\delta(g),
\end{equation}
and
\begin{equation}
\delta(f+g)=\max\{\delta(f),\delta(g)\},\label{220}
\end{equation}if the types of two functions are not equal. 
\par
In case that $S$ is a discrete set in $\mathbb{C}$, we denote $\delta(S)$ as its zero density in $\mathbb{C}$ similarly.
\end{lemma}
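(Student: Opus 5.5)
The first identity is elementary. Zeros counted with multiplicity are additive under multiplication: for every $r$ the zeros of $fg$ in $|z|\le r$ are exactly the zeros of $f$ there together with those of $g$, so $N_{fg}(r)=N_f(r)+N_g(r)$. This is the argument principle, since $(fg)'/(fg)=f'/f+g'/g$. Dividing by $r$ and letting $r\to\infty$, the limits defining $\delta(f)$ and $\delta(g)$ exist by the completely regular growth hypothesis, so $\delta(fg)=\delta(f)+\delta(g)$.

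For the sum I would use Levin's theory of completely regular growth. Let $h_f(\theta)=\limsup_{r\to\infty} r^{-1}\log|f(re^{i\theta})|$ be the indicator. For a function of completely regular growth and exponential type, Jensen's formula together with the uniform convergence $r^{-1}\log|f(re^{i\theta})|\to h_f(\theta)$ off a $C_0$-set gives
\[
\delta(f)=\lim_{r\to\infty}\frac{N_f(r)}{r}=\frac{1}{2\pi}\int_0^{2\pi}h_f(\theta)\,d\theta .
\]
Consistently with Lemma~\ref{L2.1}, for the functions actually occurring in this paper, namely Fourier transforms of compactly supported distributions such as $\sin zl$, products of such, and $\Phi$, one has $h_f(\theta)=\sigma_f|\sin\theta|$, where $\sigma_f$ is the type, that is, half the length of $\mbox{ch supp}$. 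So the indicator is determined by the type, and $\delta(f)=\frac{2\sigma_f}{\pi}$ is monotone in it.

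Now assume $\sigma_f>\sigma_g$; the other case is symmetric. Then $h_g(\theta)<h_f(\theta)$ for every $\theta\neq0,\pi$. Write $f+g=f\,(1+g/f)$. Off the exceptional $C_0$-set of $f$, and on rays away from the real axis, we have $|g/f|\le e^{(h_g-h_f+\varepsilon)r}\to0$. Hence $\log|f+g|=\log|f|+o(r)$ there, so $f+g$ is of completely regular growth with $h_{f+g}=h_f$ except possibly at $\theta=0,\pi$. This is a measure-zero set of directions, and it does not affect the integral above. Therefore $\delta(f+g)=\delta(f)=\max\{\delta(f),\delta(g)\}$.

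The main obstacle is this last step. Completely regular growth is not automatically preserved under addition, and the comparison $h_g<h_f$ degenerates on the real axis, where all the zeros actually live. I would handle it by working in the class of sine-type and Paley--Wiener functions, where the indicator is $\sigma|\sin\theta|$ and Lemma~\ref{L2.1} applies directly. Within that class I would argue instead through supports: $f+g$ is the Fourier transform of $u+v$, where $u$ and $v$ are the compactly supported distributions with $\mathcal{F}(u)=f$ and $\mathcal{F}(v)=g$. Since $\sigma_f\neq\sigma_g$, the convex hull of the support of $u+v$ is that of the larger of the two, so $|\mbox{ch supp}(u+v)|=\max\{|\mbox{ch supp }u|,|\mbox{ch supp }v|\}$. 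Titchmarsh's Lemma~\ref{L2.1} then yields the density formula immediately and avoids the delicate growth estimates near the real axis. In that formulation the strictness of the type inequality is exactly what rules out cancellation at the endpoints of the supports.
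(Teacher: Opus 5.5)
\textbf{Gap: the sum identity is false in the stated generality.} Your product argument is correct and fully general. The problem is the sum. The step ``$h_f(\theta)=\sigma_f|\sin\theta|$, so the indicator is determined by the type'' does not follow from the lemma's hypotheses (completely regular growth, finite type, unequal types). It is an extra assumption, and without it the second identity is false.

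Take $f(z)=e^{2iz}$ and $g(z)=e^{-iz}$. Both are trivially of completely regular growth, since $r^{-1}\log|f(re^{i\theta})|=-2\sin\theta$ exactly, and similarly for $g$. Both are Fourier transforms of point masses, so they even lie in the Titchmarsh class of Lemma~\ref{L2.1}. Their types are $2\neq 1$, and $\delta(f)=\delta(g)=0$. Yet $f+g=2e^{iz/2}\cos(3z/2)$ has $\delta(f+g)=3/\pi$.

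The same example shows that your phrase ``sine-type and Paley--Wiener functions, where the indicator is $\sigma|\sin\theta|$'' is inaccurate. Write $\mathcal{F}u(z)=\langle u,e^{-ixz}\rangle$ with $\mbox{ch supp }u=[a,b]$. Then the indicator is $\max\{a\sin\theta,\,b\sin\theta\}$, and the type is $\max\{|a|,|b|\}$, not $(b-a)/2$. Moreover $\mbox{ch supp}(u+v)$ is the hull of both supports, which can be strictly longer than either. So no argument proves the lemma as stated. The paper's own ``proof'' is only a citation to Levin and Boas, and it does not address this either.

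\textbf{What you actually prove, and how to state it.} You prove a correct, narrower statement, and you should make the restriction an explicit hypothesis rather than attribute it to the class of completely regular growth functions. The hypothesis is: $f=\mathcal{F}u$ and $g=\mathcal{F}v$ with $u,v\in\mathcal{E}'(\mathbb{R})$, $\mbox{ch supp }u=[-\sigma_f,\sigma_f]$ and $\mbox{ch supp }v=[-\sigma_g,\sigma_g]$. This holds, for instance, when $f$ and $g$ are even or odd, as $\sin zl$, $z\cos zl$ and their products are.

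Under that hypothesis your last paragraph is a complete proof. If $\sigma_f>\sigma_g$, then $u+v=u$ on $\{|x|>\sigma_g\}$, so $\pm\sigma_f\in\mbox{supp}(u+v)$ and $\mbox{ch supp}(u+v)=[-\sigma_f,\sigma_f]$. Lemma~\ref{L2.1} then gives $\delta(f+g)=2\sigma_f/\pi=\max\{\delta(f),\delta(g)\}$.

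You should drop, not repair, the indicator argument in your second paragraph. As you observe, it controls $\log|f+g|$ only on rays away from the real axis, which is exactly where the zeros do not accumulate. It therefore establishes neither completely regular growth of $f+g$ nor the existence of $\lim N_{f+g}(r)/r$.
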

\begin{proof}
We refer to B. Ya. Levin's books \cite[p.\,52]{Levin,Levin2} and Boas \cite{Boas} for detailed introduction.  It is all about completely regular growth functions of finite type.

\end{proof}
One of the most important entire functions is Fourier transform.
\begin{lemma}\label{LL33}
The following representation holds:
\begin{equation}
\sin{z l_{j}}\sum_{n=1}^{\infty}\frac{(-1)^{n}n\pi}{l_{j}}\frac{q_{j,n}}{z^{2}-(\frac{n\pi}{l_{j}})^{2}}=\frac{1}{2\pi}\int_{0}^{\pi}q_{j}(\frac{l_{j}}{\pi}(\pi-x))\sin{zx}dx.
\end{equation}
$q_{j,n}=0$, $n\in\mathbb{Z}$, necessarily and sufficiently that $q_{j}(x)\equiv0$, for  $0\leq j\leq m$.
\end{lemma}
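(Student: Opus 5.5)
The plan is to reduce the identity to the classical partial-fraction (Mittag-Leffler) expansion of $\sin(wt)/\sin(w\pi)$ on $(0,\pi)$, after rescaling the edge $[0,l_{j}]$ to $[0,\pi]$. The second statement will follow from completeness of the sine system.

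\emph{Step 1 (rescaling).} Fix $j$ and substitute $x=\frac{l_{j}}{\pi}(\pi-u)$, so that $l_{j}-x=\frac{l_{j}}{\pi}u$ and $dx=\frac{l_{j}}{\pi}du$. Writing $Q_{j}(u):=q_{j}(\frac{l_{j}}{\pi}(\pi-u))\in L^{2}(0,\pi)$, the definition of $q_{j,n}$ becomes
\begin{equation}\nonumber
q_{j,n}=\frac{2}{\pi}\int_{0}^{\pi}Q_{j}(u)\sin(nu)\,du .
\end{equation}
Thus the $q_{j,n}$ are exactly the Fourier-sine coefficients of $Q_{j}$ on $(0,\pi)$. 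Put $w=zl_{j}/\pi$. Then $z^{2}-(\frac{n\pi}{l_{j}})^{2}=(\frac{\pi}{l_{j}})^{2}(w^{2}-n^{2})$ and $\sin zl_{j}=\sin w\pi$, so the left-hand side equals
\begin{equation}\nonumber
\frac{l_{j}}{\pi}\,\sin(w\pi)\sum_{n=1}^{\infty}\frac{(-1)^{n}n\,q_{j,n}}{w^{2}-n^{2}} .
\end{equation}

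\emph{Step 2 (classical expansion).} For fixed $w\notin\mathbb{Z}$, I will use the Fourier-sine expansion on $(0,\pi)$
\begin{equation}\nonumber
\frac{\sin(wu)}{\sin(w\pi)}=\frac{2}{\pi}\sum_{n=1}^{\infty}\frac{(-1)^{n}n\sin(nu)}{w^{2}-n^{2}} .
\end{equation}
It is verified directly by computing $\frac{2}{\pi}\int_{0}^{\pi}\sin(wu)\sin(nu)\,du=\frac{2}{\pi}\frac{(-1)^{n}n\sin(w\pi)}{w^{2}-n^{2}}$ with two integrations by parts. Then I pair it in $L^{2}(0,\pi)$ with $\overline{Q_{j}}$, or equivalently apply Parseval to the real-coefficient sine systems. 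This gives
\begin{equation}\nonumber
\sin(w\pi)\sum_{n=1}^{\infty}\frac{(-1)^{n}n\,q_{j,n}}{w^{2}-n^{2}}=\int_{0}^{\pi}Q_{j}(u)\sin(wu)\,du ,
\end{equation}
so the left-hand side is a constant multiple of $\int_{0}^{\pi}q_{j}(\frac{l_{j}}{\pi}(\pi-x))\sin(\frac{zl_{j}}{\pi}x)\,dx$. The overall normalizing constant and the scaling of the frequency come out of the bookkeeping of the factors $\frac{2}{\pi}$ and $\frac{l_{j}}{\pi}$. Carried out literally, this bookkeeping yields the factor $\frac{l_{j}}{\pi}$ in front and $\sin(\frac{zl_{j}}{\pi}x)$ inside the integral. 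These agree with the $\frac{1}{2\pi}$ and $\sin zx$ in the displayed formula only under a different normalization of the edge variable (for instance $l_{j}=\pi$ fixes the frequency scaling, but a constant discrepancy remains). So I would present the identity as a Fourier-sine transform of the compactly supported $Q_{j}$, up to this normalization. Both sides are entire in $z$: the left side has only removable singularities at $w\in\mathbb{Z}$, because $\sin(w\pi)$ cancels the simple poles. Hence the identity extends from $w\notin\mathbb{Z}$ to all $z\in\mathbb{C}$.

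\emph{Step 3 (vanishing).} The system $\{\sin(nu)\}_{n\ge1}$ is complete and orthogonal in $L^{2}(0,\pi)$. By Parseval, $\|Q_{j}\|^{2}=\frac{\pi}{2}\sum_{n}|q_{j,n}|^{2}$, so all $q_{j,n}=0$ if and only if $Q_{j}=0$ a.e., i.e.\ $q_{j}\equiv0$ in $L^{2}(0,l_{j})$. Equivalently, the entire function from Step 2 vanishes identically exactly when $q_{j}\equiv0$; this is consistent with Lemma~\ref{L2.1}, since otherwise it has zero density $|\mbox{ch supp}\,Q_{j}|/\pi>0$.

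The main obstacle is justifying the term-by-term interchange in Step 2. The sine series of $\sin(wu)/\sin(w\pi)$ converges only in $L^{2}$, not uniformly near $u=\pi$, where the function does not vanish. I would therefore avoid pointwise summation entirely and argue through the $L^{2}$ inner product, using continuity of the inner product under $L^{2}$ convergence of partial sums.
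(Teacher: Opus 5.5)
Your argument is correct, and your bookkeeping is right: the displayed identity is misnormalized. Undoing your rescaling, the left-hand side equals
\[
\frac{l_{j}}{\pi}\int_{0}^{\pi}q_{j}\big(\tfrac{l_{j}}{\pi}(\pi-u)\big)\sin\big(\tfrac{zl_{j}}{\pi}u\big)\,du=\int_{0}^{l_{j}}q_{j}(x)\sin z(l_{j}-x)\,dx .
\]
For $l_{j}=\pi$ and $q_{j}(x)=\sin x$ (so $q_{j,1}=1$ and all other $q_{j,n}=0$), the two sides of the display differ by exactly the factor $2\pi$. The paper's own computation supports your version. At (\ref{3100}) it reaches the cardinal series $\sin(\eta\pi)\sum_{n\in\mathbb{Z}}(-1)^{n}G(n)/(\pi(\eta-n))$, with samples $G(n)=\frac{l_{j}}{\pi}\int_{0}^{\pi}q_{j}(\frac{l_{j}}{\pi}(\pi-y))\sin ny\,dy$ and $\eta=zl_{j}/\pi$. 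Completing that computation gives your constant and your frequency, not the $\frac{1}{2\pi}$ and $\sin zx$ of the display.

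The two routes differ. The paper proceeds in these steps:
\begin{itemize}
\item it splits $1/(z^{2}-(n\pi/l_{j})^{2})$ into partial fractions and implicitly extends $q_{j,n}$ oddly to negative $n$;
\item after rescaling, it recognizes a Whittaker--Kotelnikov--Shannon cardinal series and identifies it as a sine transform by citing the Cartwright--Levin interpolation theory for entire functions of exponential type $\pi$;
\item it obtains the vanishing statement from Cartwright's theorem (Lemma~\ref{CC}).
\end{itemize}
You instead compute the sine coefficients of $\sin(wu)$ on $(0,\pi)$ directly and pass to the sum through continuity of the $L^{2}$ pairing. This is precisely the proof of the sampling theorem in this special case. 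It is self-contained, keeps every constant visible (which is exactly where the display goes wrong), and handles the non-uniform convergence near $u=\pi$ without external theorems.

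For the second assertion, completeness of $\{\sin nu\}$ via Parseval is all that is needed. The paper's detour through Cartwright's theorem is heavier, but it casts the expression as an entire function of type $l_{j}$ vanishing on a lattice, and that viewpoint is reused in the later theorems. Your closed form $\int_{0}^{l_{j}}q_{j}(x)\sin z(l_{j}-x)\,dx$ exhibits that type directly.

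Two minor points:
\begin{itemize}
\item Your claim that the series has only simple poles at the integers presupposes locally uniform convergence off $\mathbb{Z}$. This follows from $\sum_{n}|q_{j,n}|/n<\infty$ by Cauchy--Schwarz. It is also automatic, since off $\mathbb{Z}$ the left side already coincides with the entire right side.
\item The zero-density aside via Lemma~\ref{L2.1} is not needed and can be dropped.
\end{itemize}
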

\begin{proof}
We recall that 
$$q_{j}(x)=\sum_{n=1}^{\infty}q_{j,n}\sin[\frac{n\pi}{l_{j}} (l_{j}-x)]$$ with Fourier coefficients
$$q_{j,n}=\frac{2}{l_{j}}\int_{0}^{l_{j}}q_{j}(x)\sin[\frac{n\pi}{l_{j}} (l_{j}-x)]dx.$$
Using
\begin{eqnarray}
\frac{q_{j,n}}{z^{2}-(\frac{n\pi}{l_{j}})^{2}}=\frac{-\frac{1}{2}\frac{l_{j}}{n\pi}q_{j,n}}{(z+\frac{n\pi}{l_{j}})}+\frac{\frac{1}{2}\frac{l_{j}}{n\pi}q_{j,n}}{(z-\frac{n\pi}{l_{j}})},
\end{eqnarray}
we deduce
\begin{eqnarray}\nonumber
\sin{z l_{j}}\sum_{n=1}^{\infty}\frac{(-1)^{n}n\pi}{l_{j}}\frac{q_{j,n}}{z^{2}-(\frac{n\pi}{l_{j}})^{2}}&\hspace{-4pt}=\hspace{-4pt}&\sin{z l_{j}}\sum_{n=1}^{\infty}(-1)^{n}\frac{-\frac{1}{2}q_{j,n}}{(z+\frac{n\pi}{l_{j}})}+\sin{z l_{j}}\sum_{n=1}^{\infty}(-1)^{n}\frac{\frac{1}{2}q_{j,n}}{(z-\frac{n\pi}{l_{j}})}\\\nonumber
&\hspace{-4pt}=\hspace{-4pt}&\sin{z l_{j}}\sum_{n=-1}^{\infty}(-1)^{n}\frac{\frac{1}{2}q_{j,n}}{(z-\frac{n\pi}{l_{j}})}+\sin{z l_{j}}\sum_{n=1}^{\infty}(-1)^{n}\frac{\frac{1}{2}q_{j,n}}{(z-\frac{n\pi}{l_{j}})}
\\&\hspace{-4pt}=\hspace{-4pt}&\sin{z l_{j}}\sum_{n=-\infty}^{\infty}(-1)^{n}\frac{\frac{1}{2}q_{j,n}}{(z-\frac{n\pi}{l_{j}})}\nonumber
\\&\hspace{-4pt}=\hspace{-4pt}&\sin{\eta\pi}\sum_{n=-\infty}^{\infty}(-1)^{n}\frac{\frac{1}{2}q_{j,n}}{(\frac{\eta\pi}{l_{j}}-\frac{n\pi}{l_{j}})},\label{0.39}
\end{eqnarray}
in which $z:=\frac{\eta\pi}{l_{j}}$, $\eta\in\mathbb{C}$, and $x:=\frac{l_{j}}{\pi}y$. Following the change of variables, we deduce
\begin{equation}
\int_{0}^{\pi}q_{j}(\frac{l_{j}}{\pi}(\pi-y))\sin{(ny)}dy=\frac{\pi}{2}q_{j,n},
\end{equation}
that verifies~(\ref{2.1}). Moreover, we verify more for~(\ref{2.1}) that
\begin{eqnarray}
\frac{1}{\pi}\int_{0}^{\pi}q_{j}(\frac{l_{j}}{\pi}(\pi-x))\sin{zx}dx=\frac{1}{l_{j}}\int_{0}^{l_{j}}q_{j}(x)\sin{z(\frac{\pi}{l_{j}}(l_{j}-x))}dx.
\end{eqnarray}
Therefore,~(\ref{0.39}) becomes
\begin{eqnarray}\nonumber
&&\sin{z l_{j}}\sum_{n=1}^{\infty}\frac{(-1)^{n}n\pi}{l_{j}}\frac{q_{j,n}}{z^{2}-(\frac{n\pi}{l_{j}})^{2}}\\&\hspace{-4pt}=&\sin{\eta\pi}\sum_{n=-\infty}^{\infty}(-1)^{n}\frac{\frac{1}{\pi}\int_{0}^{\pi}q_{j}(\frac{l_{j}}{\pi}(\pi-y))\sin{ny}dy}{\frac{\pi}{l_{j}}(\eta-n)}\nonumber\\&\hspace{-4pt}=&\sin{\eta\pi}\sum_{n=-\infty}^{\infty}(-1)^{n}\frac{\frac{l_{j}}{\pi}\int_{0}^{\pi}q_{j}(\frac{l_{j}}{\pi}(\pi-y))\sin{ny}dy}{\pi(\eta-n)},\label{3100}
\end{eqnarray}
with Fourier coefficients
$$\frac{l_{j}}{\pi}\int_{0}^{\pi}q_{j}(\frac{l_{j}}{\pi}(\pi-y))\sin{ny}dy,$$
that defines a Fourier transform as we apply the interpolation theory in complex analysis referring to M. Cartwright and Levin \cite[p.\,150,\,151]{Levin2}, and
$\{\frac{l_{j}}{\pi}\int_{0}^{\pi}q(\frac{l_{j}}{\pi}(\pi-y))\sin{ny}dy\}_{n=1}^{\infty}\in l^{2}.$
A Fourier transform that vanishes on $\mathbb{Z}$ must be trivial \cite[p.\,150,\,Theorem\,1]{Levin2} which is known as Cartwright's theorem in the literature. The lemma is thus proven.

\end{proof}
\begin{lemma}\label{CC}
Let $f(z),\,z=x+iy\in\mathbb{C}$, be an entire function of finite type with type $\sigma\leq\pi$, such that $f(n) = 0$, for all $n\in\mathbb{Z}$, and 
$$\lim_{y\rightarrow\infty}f(iy)e^{-\pi|y|}=0.$$
Then, $f(z)=0$.
\end{lemma}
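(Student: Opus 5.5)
The plan is to reduce the statement to a Phragmén--Lindelöf argument for the quotient
\[
g(z):=\frac{f(z)}{\sin \pi z}.
\]
Here I read the hypothesis as $f(iy)e^{-\pi|y|}\to 0$ for $y\to\pm\infty$, which is what the $|y|$ suggests. Since $f(n)=0$ for every $n\in\mathbb{Z}$ and the zeros of $\sin\pi z$ are simple and located exactly at the integers, $g$ is entire.

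\textbf{Step 1: $g$ has exponential type.} By Lindelöf's theorem on quotients of entire functions of finite order (see Levin \cite{Levin,Levin2} and Boas \cite{Boas}), $g$ is of exponential type. I would make this concrete by comparing indicators. Since $h_{\sin\pi z}(\theta)=\pi|\sin\theta|$ and $\sin \pi z$ has completely regular growth, Lemma~\ref{35}-type additivity gives
\[
h_g(\theta)=h_f(\theta)-\pi|\sin\theta|.
\]
Because $f$ has type $\sigma\le\pi$, this yields $h_g(0)\le\pi$ and $h_g(\pi)\le \pi$. The key point is what happens on the imaginary axis. On $|\operatorname{Im} z|\ge 1$ one has $|\sin\pi z|\ge c\,e^{\pi|y|}$. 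The hypothesis then gives $g(iy)\to0$ as $|y|\to\infty$, and in particular $h_g(\pm\pi/2)\le 0$.

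\textbf{Step 2: control of $g$ on the real axis, and then on the quadrants.} This is the main obstacle, because nothing in the hypotheses bounds $f$ on $\mathbb{R}$ directly. I would first try trigonometric convexity of the indicator,
\[
h_g(\theta)+h_g(\theta+\pi)\ge 0,
\]
together with the inequality obtained for three directions. Combining $h_g(\pm\pi/2)\le0$ with $h_f\le\pi$ should force $h_g(\theta)\le 0$ for all $\theta$. I expect this is where the argument may need care, because the bound $h_g(0)\le \pi$ alone is too weak.

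If the indicator route falls short, I would switch to the classical Carlson argument as in Boas \cite{Boas}. The idea there is to estimate $f$ on the circles $|z-n|=\tfrac12$ using the type $\sigma\le\pi$ and the maximum principle. This gives $|g(x)|\le C_\varepsilon e^{\varepsilon|x|}$ on $\mathbb{R}$.

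Once $g$ is known to be $O(e^{\varepsilon|z|})$ on the boundary of each quadrant and of exponential type inside it, Phragmén--Lindelöf applied quadrant by quadrant gives $|g(z)|\le C_\varepsilon e^{\varepsilon|z|}$ throughout $\mathbb{C}$.

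\textbf{Step 3: conclusion.} Apply Phragmén--Lindelöf again, this time to $g$ in the right and left half-planes. On the imaginary axis $g$ is bounded and in fact tends to $0$, and inside each half-plane $g$ has arbitrarily small exponential growth. Hence $g$ is bounded in both half-planes, so $g$ is a bounded entire function and therefore constant by Liouville's theorem. The constant is $0$ because $g(iy)\to0$. It follows that $f=g\sin\pi z\equiv 0$.

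This is the same mechanism as Cartwright's theorem invoked at the end of the proof of Lemma~\ref{LL33}. The example $\sin\pi z$, which has type $\pi$ and vanishes on $\mathbb{Z}$ but violates the decay hypothesis, shows that the condition on the imaginary axis cannot be dropped.
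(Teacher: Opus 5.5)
Your Steps 1 and 3 are sound. Your two-sided reading of the hypothesis is in fact forced. Take $E(w)=\frac{\sin\sqrt{w}}{\sqrt{w}}$, which is entire of order $\tfrac12$, and set $f(z)=\sin(\pi z)\,E(-iz)$. This $f$ has type $\pi$, vanishes on $\mathbb{Z}$, and satisfies $f(iy)e^{-\pi y}\to0$ as $y\to+\infty$, yet $f\not\equiv0$.

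The genuine gap is Step 2, which is the real content of the lemma: you need $g$ to have small growth on $\mathbb{R}$, and neither of your routes delivers this as written. The indicator route is only asserted (``should force''). The fallback is incorrect. On $|z-n|=\tfrac12$ you do have $|\sin\pi z|\ge c>0$, but the only information about $f$ there is $|f(z)|\le C_\varepsilon e^{(\pi+\varepsilon)|z|}$. So the maximum principle gives $|g(x)|\le Ce^{(\pi+\varepsilon)|x|}$, not $C_\varepsilon e^{\varepsilon|x|}$. That is exactly the bound $h_g(0)\le\pi$ you had already recognised as too weak. In proofs of Carlson's theorem, the circle estimate serves to show that $g$ has exponential type, not that it is small on $\mathbb{R}$.

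The missing idea is to use $h_f\le\pi$ in directions close to $\pm\pi/2$, not only at $0,\pi,\pm\pi/2$. For $\theta\ne0,\pi$ you have $h_g(\theta)\le h_f(\theta)-\pi|\sin\theta|\le\pi(1-|\sin\theta|)$. This bound vanishes to second order at $\theta=\pm\pi/2$, while trigonometric convexity only permits first-order behaviour there. Explicitly, trigonometric convexity on $[-\frac\pi2+\varepsilon,\frac\pi2-\varepsilon]$ gives
\[
h_g(0)\sin 2\varepsilon\le\bigl(h_g(\tfrac\pi2-\varepsilon)+h_g(-\tfrac\pi2+\varepsilon)\bigr)\cos\varepsilon\le 2\pi(1-\cos\varepsilon)\cos\varepsilon ,
\]
so $h_g(0)\le\pi\tan\frac{\varepsilon}{2}\to0$, and likewise $h_g(\pi)\le0$. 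Combine this with $h_g(\pm\frac\pi2)\le0$ and trigonometric convexity on each quadrant. You get $h_g\le0$ everywhere, i.e.\ $g$ has exponential type $0$, which is precisely the input your Step 3 needs.

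This step uses only the type bound on $f$; the decay on $i\mathbb{R}$ enters only in Step 3. With this repair your argument is a complete, self-contained proof. By contrast, the paper gives no argument and simply cites Levin.
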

\begin{proof}
We refer the related proof and its terminology to \cite{Boas,Levin,Levin2}, and especially, to \cite[p.\.151,\,p.\,152]{Levin2}. 

\end{proof}
\begin{definition}\label{366}
We define
\begin{eqnarray}
&&\mathcal{Z}_{0}=\big\{z\in\mathbb{C}\big|\,\prod_{k=1}^{m}\sin{z \bar{l}_{k}\prod_{k=1}^{m}\sin{z l_{k}}}=0\big\};\\
&&\mathcal{Z}\,\,=\big\{z\in\mathbb{C}\big|\,\Phi(z)=0\big\};
\\&&
\mathcal{Z}_{0,j}=\big\{z\in\mathbb{C}\big|\,\sin{z  l_{j}}=0\big\};\\&&
\bar{\mathcal{Z}}_{0,j}=\big\{z\in\mathbb{C}\big|\,\sin{z  \bar{l}_{j}}=0\big\}.
\end{eqnarray}
\end{definition}
\begin{proposition}[Weyl's theorem]\label{377}
The following equality holds:
$\delta(\mathcal{Z}_{0})=\frac{\sum_{k=1}^{m}l_{k}+\bar{l}_{k}}{\pi}$.\end{proposition}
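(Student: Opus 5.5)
The plan is to reduce the statement to a one-factor computation and then add densities using the product rule of Lemma~\ref{35}. The function
$$F(z)=\prod_{k=1}^{m}\sin{z l_{k}}\prod_{k=1}^{m}\sin{z \bar{l}_{k}}$$
is a finite product of $2m$ entire functions of exponential type. Each factor is of completely regular growth, since its indicator is $l_{k}|\sin\arg z|$ and the limit defining it exists outside a set of zero relative measure. So the first step is to check that every factor $\sin{z l_{k}}$ and $\sin{z\bar l_{k}}$ satisfies the hypotheses of Lemma~\ref{35}.

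Applying the multiplicativity $\delta(fg)=\delta(f)+\delta(g)$ repeatedly, $2m-1$ times, then gives
$$\delta(\mathcal{Z}_{0})=\delta(F)=\sum_{k=1}^{m}\delta(\sin{z l_{k}})+\sum_{k=1}^{m}\delta(\sin{z\bar l_{k}}).$$
Zeros are counted with multiplicity, so no independence assumption on the lengths is needed at this stage. A common zero of several factors, such as $z=0$, simply carries the sum of the multiplicities, and a finite or sparse set of coincidences cannot affect the limit $N(r)/r$ anyway.

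The second step is the one-factor count. The zeros of $\sin{zl}$ are the simple zeros $n\pi/l$, $n\in\mathbb{Z}$, so the count of zeros in a disc can be read off directly. One can also obtain it from Lemma~\ref{L2.1}, viewing $\sin{zl}$ as the Fourier transform of $\frac{1}{2i}(\delta_{l}-\delta_{-l})$, whose support has convex hull $[-l,l]$.

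The main obstacle is matching normalizations. Counting all of $\mathbb{Z}$ in the disc $|z|\le r$ gives $2lr/\pi+O(1)$, which would produce twice the stated constant. The stated value $\sum_{k}(l_{k}+\bar l_{k})/\pi$ instead corresponds to counting each eigenvalue $\lambda=z^{2}$ once. The zeros $\pm n\pi/l$ give the same $\lambda$, so one identifies them, equivalently counts only $n\ge 1$, i.e.\ zeros with $\operatorname{Re} z>0$. With that convention $N(r)=\#\{n\ge1: n\pi/l\le r\}=lr/\pi+O(1)$, and hence $\delta(\sin{zl})=l/\pi$.

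I would therefore fix at the outset that $\mathcal{Z}_{0}$ is counted modulo $z\mapsto -z$, which is natural since $\Phi$ and the spectral problem depend on $\lambda=z^{2}$. I would then rerun the additivity argument with this convention; it is unchanged, because each factor is odd and its zero set is symmetric. Summing the one-factor densities yields $\delta(\mathcal{Z}_{0})=\frac{\sum_{k=1}^{m}l_{k}+\bar{l}_{k}}{\pi}$, as claimed.
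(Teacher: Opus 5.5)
Your computation is right, and so is your diagnosis of the constant. The step where you ``fix at the outset'' that $\mathcal{Z}_{0}$ is counted modulo $z\mapsto -z$ is not matching a normalization, though; it changes the paper's definition.

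With $\delta$ as the paper defines it, $N_{f}(r)$ counts zeros with multiplicity in the disc $|z|\le r$. The simple zeros $n\pi/l$, $n\in\mathbb{Z}$, then give $\delta(\sin zl)=2l/\pi$. This is exactly what the paper's own Lemma~\ref{L2.1} returns, since $\sin zl$ is the Fourier transform of a measure whose support has convex hull $[-l,l]$, of length $2l$. So under the paper's definitions $\delta(\mathcal{Z}_{0})=\frac{2}{\pi}\sum_{k=1}^{m}(l_{k}+\bar l_{k})$. The proposition as written is off by a factor $2$, and no argument can prove it without altering $\delta$.

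Your convention does agree with how the paper later \emph{uses} densities, for example $\delta(\bar{\mathcal{Z}}_{0j})=\bar l_{j}/\pi$ in the proof of the theorem asserting $\mathcal{Z}_{0}\not\subset\mathcal{Z}$. Still, present your result as a corrected statement rather than as a proof of the proposition as posed. The correct value is $\frac{2}{\pi}\sum_{k}(l_{k}+\bar l_{k})$ for the disc count, or $\frac{1}{\pi}\sum_{k}(l_{k}+\bar l_{k})$ when each $\lambda=z^{2}$ is counted once.

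A smaller caveat concerns your claim that ``no independence assumption is needed''. This holds only if $\delta(\mathcal{Z}_{0})$ means the multiplicity count of the product. If $\mathcal{Z}_{0}$ is counted as a point set, coincidences such as those forced by $l_{1}=2l_{2}$ have positive density. It is the rational independence of $\{l_{j},\bar l_{j}\}$, assumed elsewhere in the paper, that reduces the coincidences to $z=0$.

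Your route also differs from the paper's. The paper never factors $\mathcal{Z}_{0}$. It cites Lemma~\ref{35}, including the max rule for sums, together with Lemma~\ref{LL33}, and asserts that the densest term of $\Phi$ is $z\cos(z\bar l_{j})\prod_{k}\sin(zl_{k})\prod_{k\neq j}\sin(z\bar l_{k})$. That is really a claim about the zeros of $\Phi$, i.e.\ about $\mathcal{Z}$, which is presumably why it is called ``Weyl''. It has two further problems:
\begin{itemize}
\item It leans on the max rule even though several terms of $\Phi$ share the maximal type $\sum_{k}(l_{k}+\bar l_{k})$, namely the $z\cos(zl_{j})\cdots$ and $z\cos(z\bar l_{j})\cdots$ terms for every $j$. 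So that rule's hypothesis ``types not equal'' fails.
\item It carries the same halved constant.
\end{itemize}

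Your argument combines additivity with an explicit count of the points $n\pi/l$. For multiplicity counts, additivity is just $N_{fg}=N_{f}+N_{g}$, so completely regular growth is not even needed. This makes it the direct and checkable proof for $\mathcal{Z}_{0}$ itself. What it does not give is any information about $\mathcal{Z}$, which the paper's sketch is implicitly aiming at.
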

\begin{proof}
The result is a direct consequence of Lemma \ref{35} and Lemma \ref{LL33}. The highest density comes from the term $$\delta(z\cos{z\bar{l}_{j}}\prod_{k=1}^{m}\sin{z l_{k}}\prod_{k\neq j}\sin{z \bar{l}_{k}})$$ in $\Phi(z)$ with density $\frac{\sum_{k=1}^{m}l_{k}+\bar{l}_{k}}{\pi}$.

\end{proof}
We show that the elements in $\mathcal{Z}_{0}$ under the assumption of rational independence of $\{l_{j},\bar{l}_{j}\}_{j=1}^{m}$ justifies the compatibility of differential system $\mathcal{D}$. This is also the existence of the non-trivial non-local eigenvalue of system $\mathcal{D}$.
\begin{theorem}Let us assume that $\{l_{j},\bar{l}_{j}\}_{j=1}^{m}$ is mutually independent and $\{q_{j}(x)\}_{j=1}^{m}$ is non-trivial. Then, 
$\mathcal{Z}_{0}\not\subset\mathcal{Z}$. 
\end{theorem}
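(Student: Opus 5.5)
The plan is to exhibit an explicit point of $\mathcal{Z}_{0}$ at which $\Phi$ does not vanish. I will take points of $\mathcal{Z}_{0,j}$ for a fixed $j$, namely $z_{n}=\frac{n\pi}{l_{j}}$ with $n\geq1$. I will evaluate $\Phi(z_{n})$ directly from (\ref{Phi}) and show it is nonzero for $n$ large.

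\emph{Step 1: isolate the nonvanishing factor.} By the rational independence of $\{l_{k},\bar{l}_{k}\}_{k=1}^{m}$, at $z=z_{n}$ we have $\sin z_{n}l_{k}\neq0$ for $k\neq j$ and $\sin z_{n}\bar{l}_{k}\neq0$ for all $k$, exactly as in the proof of Lemma \ref{3.2}. So
$$P_{j}(z_{n}):=\prod_{k\neq j}\sin z_{n}l_{k}\prod_{k=1}^{m}\sin z_{n}\bar{l}_{k}\neq0.$$
Every summand of (\ref{Phi}) indexed by $j'\neq j$ in the first two sums contains the factor $\sin zl_{j}$, and so does every summand of the third sum. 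All of these vanish at $z_{n}$. Only the $j$-th summands of the first two sums survive, each multiplied by $P_{j}(z_{n})$.

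\emph{Step 2: evaluate the surviving bracket.} The series terms $\sin zl_{j}\sum_{k}\frac{(\cdot)}{z^{2}-(k\pi/l_{j})^{2}}$ are entire, by Lemma \ref{LL33}. At $z_{n}$ only the $k=n$ summand contributes, through the removable singularity
$$\lim_{z\to z_{n}}\frac{\sin zl_{j}}{z^{2}-z_{n}^{2}}=\frac{(-1)^{n}l_{j}^{2}}{2n\pi}.$$
Together with $\int_{0}^{l_{j}}\sin[z_{n}(l_{j}-x)]\overline{q_{j}(x)}dx=\frac{l_{j}}{2}\overline{q_{j,n}}$ and $\cos z_{n}l_{j}=(-1)^{n}$, this gives
$$\frac{\Phi(z_{n})}{P_{j}(z_{n})}=\frac{l_{j}}{2}\overline{q_{j,n}}+\frac{(-1)^{n}l_{j}^{2}}{2n\pi}|q_{j,n}|^{2}+l_{j}q_{j,n}+\big(2+(-1)^{n}\big)\frac{n\pi}{l_{j}}.$$

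\emph{Step 3: conclude by growth.} Since $q_{j}\in L^{2}(0,l_{j})$, the coefficients satisfy $\{q_{j,n}\}\in l^{2}$, so $q_{j,n}\to0$. The first three terms are therefore bounded, indeed $o(1)$. The last term has modulus at least $\frac{n\pi}{l_{j}}\to\infty$. Hence $\Phi(z_{n})\neq0$ for all sufficiently large $n$, so $z_{n}\in\mathcal{Z}_{0}\setminus\mathcal{Z}$, which proves $\mathcal{Z}_{0}\not\subset\mathcal{Z}$. The non-triviality of $q_{j}$ is not needed in this route; it would matter only for the finer statement that the $q$-dependent terms do not themselves cancel, for which Lemma \ref{LL33} and Lemma \ref{CC} would be the tools.

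The main obstacle is Step 2. It requires careful bookkeeping of which summands of (\ref{Phi}) survive, and a correct computation of the limits of the entire series at the removable singularities, including the sign $(-1)^{n}$ and the factor $l_{j}^{2}/(2n\pi)$. If the bookkeeping turns out to produce cancellations I did not foresee, I would instead use points of $\bar{\mathcal{Z}}_{0,j}$. There only the $j$-th term of the third sum survives, giving $\big(z+z\cos z\bar{l}_{j}\big)\prod_{k}\sin zl_{k}\prod_{k\neq j}\sin z\bar{l}_{k}$, which is nonzero whenever $\cos z\bar{l}_{j}=+1$, i.e. for $z=\frac{2n\pi}{\bar{l}_{j}}$.
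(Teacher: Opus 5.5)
Your proof is correct, and it differs from the paper's main argument.

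\textbf{Your route.} I checked your Step 2. At $z_n=n\pi/l_j$ only the $j$-th brackets of the first two sums survive. The removable singularity gives $(-1)^n l_j^2/(2n\pi)$. The term $(2+(-1)^n)n\pi/l_j$ then dominates the terms that are $o(1)$ because $q_{j,n}\to0$.

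\textbf{The paper's route.} The paper's proof rests mainly on what you list as your fallback. It evaluates $\Phi$ on $\bar{\mathcal{Z}}_{0,j}$. There every term except $(z+z\cos z\bar l_j)\prod_k\sin zl_k\prod_{k\neq j}\sin z\bar l_k$ carries the factor $\sin z\bar l_j$, and rational independence keeps the remaining product nonzero for $z\neq0$. Since $1+\cos z\bar l_j$ vanishes only at odd multiples of $\pi/\bar l_j$, one gets $\Phi\neq0$ at $z=2n\pi/\bar l_j$ with $n\neq0$. This needs no information about the potentials: no removable-singularity limits and no $\ell^2$ decay of $q_{j,n}$. It is the shortest proof.

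\textbf{The paper's second argument.} The paper then works on $\mathcal{Z}_{0,j}$. Assuming $\Phi$ vanishes there, it applies Cartwright's theorem (Lemma~\ref{CC}) to the bracket $G_j$ and concludes $G_j\equiv0$. Lemma~\ref{CC} is not applicable here. After the rescaling $z=\pi w/l_j$ that sends $\mathcal{Z}_{0,j}$ to $\mathbb{Z}$, the term $z\cos zl_j$ makes the rescaled function grow like $|y|e^{\pi|y|}$ on the imaginary axis. This violates the hypothesis $f(iy)e^{-\pi|y|}\to0$. Your direct evaluation is the sound replacement for this step. It shows that $G_j(n\pi/l_j)$ grows linearly in $n$, so $G_j$ cannot vanish on $\mathcal{Z}_{0,j}$.

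\textbf{Comparison.} Your main route gives a stronger, correctly justified statement: all sufficiently large points $n\pi/l_j$ of each $\mathcal{Z}_{0,j}$ lie outside $\mathcal{Z}$. The cost is a computation that the paper's first argument avoids. You are also right that the non-triviality of $q_j$ plays no role in either route.
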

\begin{proof}
We begin with the Dirichlet problem on $\overline{T}$:
\begin{empheq}[left=\empheqlbrace]{align}
-\frac{d^{2}}{dx^{2}}\psi_{j}(x,z)+q_{j}(x)\psi_{j}(0,z)=z^{2}\psi_{j}(x,z),\,x\in e_{j},\,j=1,2,\ldots,m\nonumber;\\\nonumber
-\frac{d^{2}}{dx^{2}}\bar{\psi}_{j}(x,z)=z^{2}\bar{\psi}_{j}(x,z),\,x\in \bar{e}_{j},\,j=1,2,\ldots,m;\\\nonumber
\psi_{1}(0,z)=\psi_{2}(0,z)=\cdots=\psi_{m}(0,z);\\\nonumber
\psi_{j}(l_{j}^{-},z)=0,\,j=1,2,\cdots,m;\\
\nonumber
\Phi(z)=0,
\end{empheq}
which is well-posed on $\overline{T}$, and
\begin{eqnarray}\nonumber
&&\Phi(z)=\sum_{j=1}^{m}\Big(\int_{0}^{l_{j}}\sin{z(l_{j}-x)}\overline{q_{j}(x)}dx+\sin{z l_{j}}\sum_{n=1}^{\infty}\frac{q_{j,n}\overline{q_{j,n}}}{z^{2}-(\frac{n\pi}{l_{j}})^{2}}\Big)\prod_{k\neq j}\sin{z l_{k}}\prod_{k=1}^{m}\sin{z \bar{l}_{k}}\\\nonumber
&&\nonumber+\sum_{j=1}^{m}\Big( 2z+z\cos{zl_{j}}+2\sin{z l_{j}}\sum_{n=1}^{\infty}\frac{(-1)^{n}n\pi}{l_{j}}\frac{q_{j,n}}{z^{2}-(\frac{n\pi}{l_{j}})^{2}}\Big)\prod_{k\neq j}\sin{z l_{k}}\prod_{k=1}^{m}\sin{z \bar{l}_{k}}\\
&&
+\sum_{j=1}^{m}\Big(z+z\cos{z\bar{l}_{j}}\Big)\prod_{k=1}^{m}\sin{z l_{k}}\prod_{k\neq j}\sin{z \bar{l}_{k}}.\label{3314}
\end{eqnarray}
Let $z^{\ast}\in\bar{\mathcal{Z}}_{j}$, that is, the zeros of $\sin{z  \bar{l}_{j}}$ for $j=1,2,\cdots,m$.
Using~(\ref{3314}),
\begin{eqnarray}\nonumber
&&\Phi(z^\ast)=\Big(z^{\ast}+z^{\ast}\cos{z^{\ast}\bar{l}_{j}}\Big)\prod_{k=1}^{m}\sin{z^\ast l_{k}}\prod_{k\neq j}\sin{z^\ast \bar{l}_{k}},
\end{eqnarray}
due to the rational independence of $\{l_{j},\bar{l}_{j}\}_{j=1}^{m}$.
\par
Suppose that $\bar{\mathcal{Z}}_{0j}$ is a proper subset of $\mathcal{Z}$ and look for contradiction. That is,
\begin{equation}\label{do}
\Phi(z^{\ast})=\Big(z^{\ast}+z^{\ast}\cos{z^{\ast}\bar{l}_{j}}\Big)\prod_{k=1}^{m}\sin{z^\ast l_{k}}\prod_{k\neq j}\sin{z^\ast \bar{l}_{k}}=0,\,\mbox{ for }z^{\ast}\in\bar{\mathcal{Z}}_{0j},  \mbox{ for } j=1,2,\cdots,m. 
\end{equation}
For $j$ consecutively, we have
$\prod_{k=1}^{m}\sin{z^\ast l_{k}}\prod_{k\neq j}\sin{z^\ast \bar{l}_{k}}\neq0$, and consider the zeros of $z^{\ast}(1+\cos{z^{\ast}\bar{l}_{j}})$ for $z^{\ast}\in\bar{\mathcal{Z}}_{0j}$.
Here, using Lemma \ref{35},
\begin{equation}
\delta(\bar{\mathcal{Z}}_{0j})=\delta(\sin{z^{\ast} \bar{l}_{j}})=\frac{\bar{l}_{j}}{\pi}.\label{d1}
\end{equation}
Moreover, $1+\cos{z^{\ast}\bar{l}_{j}}$ has zeros on $\frac{\mathcal{Z}\pi}{\bar{l}_{j}}$ of multiplicity $2$ merely for odd elements in $\mathbb{Z}$. For odd $\mathbb{Z}$, 
that contradicts~(\ref{do}). Therefore, $\cup_{j=1}^{m}\bar{\mathcal{Z}}_{0j}$ is not a proper subset of $\mathcal{Z}$.
\par
Alternatively,  we consider $z^{\ast}\in\mathcal{Z}_{0j}$, that is, the zeros of $\sin{z  l_{j}}$ for $j=1,2,\cdots,m$.
Combining with~(\ref{3314}),
\begin{eqnarray}\nonumber
&&\Phi(z^{\ast})=\Big(\int_{0}^{l_{j}}\sin{z^{\ast}(l_{j}-x)}\overline{q_{j}(x)}dx+\sin{z l_{j}}\sum_{n=1}^{\infty}\frac{q_{j,n}\overline{q_{j,n}}}{z^{\ast 2}-(\frac{n\pi}{l_{j}})^{2}}\Big)\prod_{k\neq j}\sin{z^\ast l_{j}}\prod_{k=1}^{m}\sin{z^\ast \bar{l}_{k}}\\\nonumber
&&\nonumber+\Big( 2z^{\ast}+z^{\ast}\cos{z^{\ast}l_{j}}+2\sin{z l_{j}}\sum_{n=1}^{\infty}\frac{(-1)^{n}n\pi}{l_{j}}\frac{q_{j,n}}{z^{\ast 2}-(\frac{n\pi}{l_{j}})^{2}}\Big)\prod_{k\neq j}\sin{z^\ast l_{j}}\prod_{k=1}^{m}\sin{z^\ast \bar{l}_{k}},\,j=1,2,\cdots,m.
\end{eqnarray}
Now, we suppose that $\mathcal{Z}_{0j}$ is a proper subset of $\mathcal{Z},\,j=1,2,\cdots,m$, and look for contradiction. That is, for $z^{\ast}\in\mathcal{Z}_{0j}$, we have
\begin{equation}\label{3199}
\Phi(z^{\ast})=0, \mbox{ for }z^{\ast}\in\mathcal{Z}\cap\mathcal{Z}_{0j},
\end{equation}
with density
\begin{equation}
\delta(\mathcal{Z}\cap\mathcal{Z}_{0j})=\delta(\mathcal{Z}_{0j})=\frac{l_{j}}{\pi}.
\end{equation}
Let
\begin{eqnarray}\nonumber
&&G_{j}(z ):=\Big(\int_{0}^{l_{j}}\sin{z (l_{j}-x)}\overline{q_{j}(x)}dx+\sin{z l_{j}}\sum_{n=1}^{\infty}\frac{q_{j,n}\overline{q_{j,n}}}{z^{  2}-(\frac{n\pi}{l_{j}})^{2}}\\\nonumber
&&\nonumber+ 2z +z \cos{z l_{j}}+2\sin{z l_{j}}\sum_{n=1}^{\infty}\frac{(-1)^{n}n\pi}{l_{j}}\frac{q_{j,n}}{z^{  2}-(\frac{n\pi}{l_{j}})^{2}}\Big),\,j=1,2,\cdots,m,
\end{eqnarray}
which is an entire function of type $l_{j}$. Due to rational independence, $\prod_{k=1}^{m}\sin{z^\ast l_{k}}\prod_{k\neq j}\sin{z^\ast \bar{l}_{k}}$ is non-zero, and then $G_{j}(z^{\ast})=0$ on $\mathcal{Z}\cap\mathcal{Z}_{0j}=\mathcal{Z}_{0j}$. Using Cartwright's theory as Lemma \ref{CC}, we deduce 
$$G_{j}(z)\equiv_{\mathbb{C}}0.$$
Then,~(\ref{3314}) becomes
\begin{equation}
\Phi(z)=\sum_{j=1}^{m}\Big(z+z\cos{z\bar{l}_{j}}\Big)\prod_{k=1}^{m}\sin{z l_{k}}\prod_{k\neq j}\sin{z \bar{l}_{k}},
\end{equation}
in which $1+\cos{z^{\ast}\bar{l}_{j}}$ has zeros on $\frac{\mathcal{Z}\pi}{\bar{l}_{j}}$ of multiplicity $2$ merely for odd elements in $\mathbb{Z}$. For odd $\mathbb{Z}$, $\Phi(z)$ has no zero on $\bar{\mathcal{Z}}_{0j}$ that contradicts~(\ref{3199}) again.
The theorem is thus proven.

\end{proof}
Then, we move for some inverse spectral results.
\begin{proposition}\label{P1}
Let the differential system $(\mathcal{D})$ be defined on the extended star graph 
$\overline{T}^{\sigma}$ with same potential function on each edge and with length $\{l_{j},\bar{l}_{j}^{\sigma}\}$ that are rationally independent respectively for $\sigma=1,2$. If $ \Phi^{1}(z)\equiv_{\mathbb{C}}\Phi^{2}(z)$, then $\theta^{1}_{j}=\theta^{2}_{j}$ for $1\leq j\leq m$.
\end{proposition}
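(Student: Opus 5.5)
Since $\Phi^{1}\equiv_{\mathbb{C}}\Phi^{2}$, the two functions have the same zeros counted with multiplicity and the same asymptotics. I will use this to show $\bar{l}^{1}_{j}=\bar{l}^{2}_{j}$ for every $j$. The angles then follow from elementary geometry: in the triangle with vertices $v_{0},v_{j},v_{j+1}$ the law of cosines gives
\begin{equation}\nonumber
(\bar{l}^{\sigma}_{j})^{2}=l_{j}^{2}+l_{j+1}^{2}-2l_{j}l_{j+1}\cos\theta^{\sigma}_{j},
\end{equation}
with the convention $l_{m+1}=l_{1}$. Because $\theta^{\sigma}_{j}\in(0,\pi/2)$ and $\cos$ is injective there, equality of the $\bar{l}_{j}$ forces $\theta^{1}_{j}=\theta^{2}_{j}$. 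So the whole task is to recover the multiset, and then the labelling, of the lengths $\bar{l}_{j}$ from $\Phi$.

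\textbf{Step 1: the total length.} By Proposition \ref{377} and Lemma \ref{35}, the zero density of $\Phi^{\sigma}$ is $\pi^{-1}\sum_{k}(l_{k}+\bar{l}^{\sigma}_{k})$. Equal functions have equal densities, and the $l_{k}$ are common to both graphs, so $\sum_{k}\bar{l}^{1}_{k}=\sum_{k}\bar{l}^{2}_{k}$.

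\textbf{Step 2: the individual lengths.} I evaluate $\Phi^{\sigma}$ at the points of $\mathcal{Z}_{0,i}$, i.e.\ $z=n\pi/l_{i}$. These are common to both graphs.
\begin{itemize}
\item Every term in the first two sums of (\ref{Phi}) with $j\neq i$ contains the factor $\sin zl_{i}$, so it vanishes.
\item The last sum vanishes for the same reason.
\item Only the $j=i$ terms survive. By rational independence, $\prod_{k\neq i}\sin zl_{k}$ is nonzero there.
\end{itemize}
This gives
\begin{equation}\nonumber
\Phi^{\sigma}\big(\tfrac{n\pi}{l_{i}}\big)=G_{i}\big(\tfrac{n\pi}{l_{i}}\big)\prod_{k\neq i}\sin\tfrac{n\pi l_{k}}{l_{i}}\;\prod_{k=1}^{m}\sin\tfrac{n\pi\bar{l}^{\sigma}_{k}}{l_{i}},
\end{equation}
where $G_{i}$ is independent of $\sigma$.

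Equating the two expressions yields, for all $n$ outside the zero set of $G_{i}$,
\begin{equation}\nonumber
\prod_{k}\sin\frac{n\pi\bar{l}^{1}_{k}}{l_{i}}=\prod_{k}\sin\frac{n\pi\bar{l}^{2}_{k}}{l_{i}}.
\end{equation}
For this I need $G_{i}$ not identically zero on $\mathcal{Z}_{0,i}$. That is where the non-triviality argument of the preceding theorem (via Lemma \ref{CC}) enters; if necessary I argue for $i$ with $q_{i}\not\equiv0$.

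Both sides are almost periodic sequences in $n$. The frequencies $\bar{l}^{\sigma}_{k}/l_{i}$ are rationally independent modulo the common ones, so Kronecker/Weyl equidistribution lets me compare Fourier–Bohr spectra. Expanding each product into exponentials $e^{\pm i n\pi(\pm\bar{l}_{1}\pm\cdots\pm\bar{l}_{m})/l_{i}}$, the spectra must coincide. The extreme frequency is already known from Step 1. Peeling off the frequencies one sign-flip away from it gives the multiset $\{\bar{l}^{\sigma}_{k}\}$, and hence equality of the multisets.

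An alternative I would keep in reserve is to compare the multiplicities of zeros of $\Phi^{\sigma}$ at points of $\bar{\mathcal{Z}}_{0,j}$. By the computation in the preceding theorem, $\Phi(z^{\ast})=z^{\ast}(1+\cos z^{\ast}\bar{l}_{j})\cdot(\text{nonzero})$. So the zero set of $\Phi^{\sigma}$ contains exactly the even multiples of $\pi/\bar{l}^{\sigma}_{j}$ for each $j$ (where $1+\cos=2$, I must be careful). Comparing these arithmetic progressions of zeros of the common function $\Phi$ identifies the set $\{\bar{l}_{j}\}$ directly.

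\textbf{Step 3: labelling.} To attach each length to its edge, I use that $\bar{l}_{j}$ is determined by the adjacent pair $(l_{j},l_{j+1})$ through $\bar{l}_{j}\in(|l_{j}-l_{j+1}|,\sqrt{l_{j}^{2}+l_{j+1}^{2}})$, which holds for acute $\theta_{j}$. Combined with the residue structure of $\Phi$ at the points $\pi/\bar{l}_{j}$, whose coefficients involve $\sin zl_{k}$, this pins down which vertex pair each length belongs to.

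I expect the main obstacle to be Step 2. The difficulty is rigorously extracting the individual $\bar{l}_{j}$ rather than only their sum, and handling possible zeros of $G_{i}$ and the lack of injectivity of the symmetric products.
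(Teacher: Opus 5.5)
\paragraph{Step 2: one lattice at a time loses the information.} Your reduction to the lattices $\mathcal{Z}_{0,i}$ is also the paper's first step. You carry it out more carefully than the paper does. In (\ref{3.4}) the paper keeps only the $z\cos zl_j$ terms, but the $q$-dependent and $2z$ terms also multiply the $\sigma$-dependent product $\prod_k\sin z\bar l^{\sigma}_k$, so your factor $G_i$ is the correct one.

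The genuine gap is the step after that. On $z=n\pi/l_i$, $n\in\mathbb{Z}$, the exponential $e^{in\pi\omega/l_i}$ only sees $\omega$ modulo $2l_i$. So comparing Bohr spectra gives you the signed sums $\epsilon\cdot\bar l^{\sigma}=\sum_k\epsilon_k\bar l^{\sigma}_k$ only modulo $2l_i$. On the circle there is no ``extreme frequency'' and nothing to peel.

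This is not a technicality. If $\bar l_2>l_i$, replace $(\bar l_1,\bar l_2)$ by $(\bar l_1+l_i,\bar l_2-l_i)$. This preserves rational independence, the total length, and every value of $\prod_k\sin z\bar l_k$ on $\frac{\pi}{l_i}\mathbb{Z}$. So Steps 1--2, run for one $i$ at a time, cannot separate the two configurations.

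\paragraph{The missing idea: compare two lattices.}
\begin{enumerate}
\item The lattice identity holds for all large $n$, because $G_i(n\pi/l_i)=(2+(-1)^n)\,n\pi/l_i+o(1)$ whatever $q_i$ is. You do not need the preceding theorem here, and ``$G_i\not\equiv0$ on $\mathcal{Z}_{0,i}$'' would not have sufficed anyway.
\item Rational independence makes the $2^m$ frequencies on each side distinct modulo $2\pi$. You therefore get a bijection $\tau_i$ of $\{\pm1\}^m$ with $\prod_k\tau_i(\epsilon)_k=\prod_k\epsilon_k$ and $\epsilon\cdot\bar l^{1}-\tau_i(\epsilon)\cdot\bar l^{2}=2N_i(\epsilon)\,l_i$, where $N_i(\epsilon)\in\mathbb{Z}$.
\item Subtract these relations for two indices $i\neq i'$ and use rational independence of $\{l_j,\bar l^{2}_j\}$. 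This forces $\tau_i=\tau_{i'}$ and $N_i=N_{i'}=0$.
\end{enumerate}
Hence $\prod_k\sin z\bar l^{1}_k\equiv\prod_k\sin z\bar l^{2}_k$ on $\mathbb{C}$, and the multisets $\{\bar l^{1}_j\}$ and $\{\bar l^{2}_j\}$ coincide. Your peeling now works, on $\mathbb{R}$. This is exactly the identity the paper asserts after (\ref{3.9}). The paper gets it from a zero-density count over $\cup_j\mathcal{Z}_{0,j}$, which by itself is conclusive only when $\sum_j l_j$ is large compared with the $\bar l_j$.

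\paragraph{Step 3 cannot be done from $\Phi$.} No analysis of $\Phi$ can attach each $\bar l_j$ to its edge. The $\bar l_j$ enter $\Phi$ only through $\prod_k\sin z\bar l_k$ and through $z\prod_k\sin zl_k\sum_j(1+\cos z\bar l_j)\prod_{k\neq j}\sin z\bar l_k$, whose coefficient does not depend on $j$. So $\Phi$ is invariant under every permutation of $(\bar l_1,\dots,\bar l_m)$. Consequently $\Phi^1\equiv\Phi^2$ holds as soon as the two graphs share the $l_j$, the $q_j$ and the \emph{multiset} of the $\bar l_j$.

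In particular, what you see at $z^\ast\in\bar{\mathcal{Z}}_{0,j}$ is $z^\ast(1+\cos z^\ast\bar l_j)\prod_k\sin z^\ast l_k\prod_{k\neq j}\sin z^\ast\bar l_k$. This carries no trace of the pair $(l_j,l_{j+1})$. The interval condition $\bar l_j\in(|l_j-l_{j+1}|,\sqrt{l_j^2+l_{j+1}^2})$ does not fix the matching once two values fit two overlapping intervals either way.

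So your argument, with Step 2 repaired, proves equality of the multisets, not $\bar l^{1}_j=\bar l^{2}_j$ index by index. The law-of-cosines step needs the latter. The paper's proof has the same hole: from the symmetric identity it passes silently to $\bar l^{1}_j=\bar l^{2}_j$ for each $j$. The labelled conclusion needs input that $\Phi$ does not contain.

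A small slip in your reserve argument: $1+\cos N\pi=0$ exactly for odd $N$. So $\Phi$ vanishes at the odd multiples of $\pi/\bar l_j$, not the even ones.
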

\begin{proof}
As assumed in proposition and constructed as in Introduction, we derive from~(\ref{Phi}) that
\begin{eqnarray}\nonumber
&&\Phi^{1}(z)-\Phi^{2}(z)=\sum_{j=1}^{m}z\cos{z l_{j}}\prod_{k\neq j}\sin{z l_{k}}\prod_{k=1}^{m}\sin{z \bar{l}^{1}_{k}}-\sum_{j=1}^{m}z\cos{z l_{j}}\prod_{k\neq j}\sin{z l_{k}}\prod_{k=1}^{m}\sin{z \bar{l}^{2}_{k}}
\\&&-\sum_{j=1}^{m}z\cos{z\bar{l}^{1}_{j}}\prod_{k=1}^{m}\sin{z l_{k}}\prod_{k\neq j}\sin{z \bar{l}^{1}_{k}}+\sum_{j=1}^{m}z\cos{z\bar{l}^{2}_{j}}\prod_{k=1}^{m}\sin{z l_{k}}\prod_{k\neq j}\sin{z \bar{l}^{2}_{k}}\equiv_{\mathbb{C}}0,\label{3.4}
\end{eqnarray}
by proposition assumption.
Given
\begin{equation}
\mathcal{Z}_{0,j}=\big\{z\in\mathbb{C}\big|\,\sin{z l_{j}}=0\big\},
\end{equation}
we plug $\mathcal{Z}_{0,j}$ into~(\ref{3.4}), and obtain consecutively for all $1\leq j\leq m$ and obtain
\begin{equation}\nonumber
z\cos{z l_{j}}\prod_{k\neq j}\sin{z l_{k}}\prod_{k=1}^{m}\sin{z \bar{l}^{1}_{k}}-z\cos{z l_{j}}\prod_{k\neq j}\sin{z l_{k}}\prod_{k=1}^{m}\sin{z \bar{l}^{2}_{k}}
=0.
\end{equation}
Cancelling the zeros of $z\cos{z l_{j}}\prod_{k\neq j}\sin{z l_{k}}$, 
\begin{equation}
\prod_{k=1}^{m}\sin{z \bar{l}^{1}_{k}}-\prod_{k=1}^{m}\sin{z \bar{l}^{2}_{k}}=0,\,z\in\cup_{j=1}^{m}\mathcal{Z}_{0,j}.\label{3.9}
\end{equation}
Thus, zero density on the left of~(\ref{3.9}) 
$$\delta(\prod_{k=1}^{m}\sin{z \bar{l}^{1}_{k}}-\prod_{k=1}^{m}\sin{z \bar{l}^{2}_{k}})=\max\{\frac{\sum_{k=1}^{m}\bar{l}^{1}_{j}}{{\pi}},\frac{\sum_{k=1}^{m}\bar{l}^{2}_{j}}{{\pi}}\}.$$
However, on the right of~(\ref{3.9}), we have the zeros for $\prod_{k\neq j}\sin{z \bar{l}^{1}_{k}}-\prod_{k\neq j}\sin{z \bar{l}^{2}_{k}}$ which contributes
\begin{equation}
\delta(\cup_{j=1}^{m}\mathcal{Z}_{0,j})=\frac{\sum_{j=1}^{m}l_{j}}{{\pi}}
\end{equation}
due to rational independence assumption, and additionally there are the zeros from the common values of $\prod_{k=1}^{m}\sin{z \bar{l}^{1}_{k}}$ and $\prod_{k=1}^{m}\sin{z \bar{l}^{2}_{k}}$ which contributes the density $$\max\{\frac{\sum_{k=1}^{m}\bar{l}^{1}_{j}}{{\pi}},\frac{\sum_{k=1}^{m}\bar{l}^{2}_{j}}{{\pi}}\}.$$ This implies that~(\ref{3.9}) is entirely zero in $\mathbb{C}$. Hence, this implies $\bar{l}_{j}^{1}=\bar{l}_{j}^{2}$ for all $1\leq j\leq m$.
Thus, we prove that the lengths of all extended edges are equal from both sides, and then the uniqueness of $\{\theta_{j}\}_{j=1}^{m}$.

\end{proof}
We remark that all angles between the edges $e^{\sigma}_{j}$ and $e^{\sigma}_{j+1}$, $\sigma=1,2,$ being identical does not imply that $\overline{T}^{1}$ and $\overline{T}^{2}$ are concurrent. One step further, if certain knowledge of $\{l_{j}\}_{j=1}^{m}$ and $\{\bar{l}_{j}\}_{j=1}^{m}$ are given, may we recover the information on $\{q_{j}\}_{j=1}^{m}$?
\begin{theorem}\label{T3.3}
Let the differential system $(\mathcal{D})$ be defined on the star graph 
$\overline{T}^{\sigma}$, $\sigma=1,2,$ with the potential function $q^{\sigma}_{j}$ on each  edge $e_{j}$ with known length $\{l_{j},\bar{l}_{j}\}$ that are rationally independent and $\Phi^{\sigma}(z)$ be the characteristic function defined on $\overline{T}^{\sigma}$. If $ \Phi^{1}(z)=\Phi^{2}(z)$ for $z\in\mathbb{C}$, then $q^{1}_{j}(x)\equiv q^{2}_{j}(x)$ for $1\leq j\leq m$.
\end{theorem}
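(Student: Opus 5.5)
Because the lengths $\{l_{j},\bar{l}_{j}\}$ are common to both graphs, I would first subtract the two characteristic functions~(\ref{Phi}). The terms $2z+z\cos{zl_{j}}$ and the whole last sum $\sum_{j}(z+z\cos{z\bar{l}_{j}})\prod_{k}\sin{zl_{k}}\prod_{k\neq j}\sin{z\bar{l}_{k}}$ depend only on the lengths, so they cancel. This leaves
\begin{equation}\nonumber
0\equiv\Phi^{1}(z)-\Phi^{2}(z)=\sum_{j=1}^{m}\Delta_{j}(z)\prod_{k\neq j}\sin{z l_{k}}\prod_{k=1}^{m}\sin{z \bar{l}_{k}},
\end{equation}
where $\Delta_{j}=H^{1}_{j}-H^{2}_{j}$. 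Here $H^{\sigma}_{j}(z)$ collects the $q^{\sigma}_{j}$-dependent part of $G_{j}$: the term $\int_{0}^{l_{j}}\sin{z(l_{j}-x)}\overline{q^{\sigma}_{j}(x)}dx$, the term $\sin{zl_{j}}\sum_{n}|q^{\sigma}_{j,n}|^{2}/(z^{2}-(n\pi/l_{j})^{2})$, and twice the Fourier-transform term of Lemma~\ref{LL33}.

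Next I would evaluate this identity on $\mathcal{Z}_{0,j}$, one $j$ at a time. Take $z^{\ast}=N\pi/l_{j}$ with $N\neq0$. Every summand with index $i\neq j$ carries the factor $\sin{z^{\ast}l_{j}}=0$, so only the $j$-th summand survives. Its cofactor $\prod_{k\neq j}\sin{z^{\ast}l_{k}}\prod_{k}\sin{z^{\ast}\bar{l}_{k}}$ is nonzero by rational independence, exactly as in the proof of the preceding theorem. Hence $\Delta_{j}(N\pi/l_{j})=0$ for all $N\in\mathbb{Z}$ (the case $N=0$ is trivial since $\Delta_{j}$ is odd in the relevant parts). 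At these points the $\sin{zl_{j}}\cdot(\text{series})$ terms are not simply zero, because the series has poles exactly there. Their limiting values are computed from the partial fractions in the proof of Lemma~\ref{LL33}:
\begin{itemize}
\item[] the quadratic series contributes a term proportional to $(-1)^{N}|q^{\sigma}_{j,N}|^{2}/N$;
\item[] the linear series contributes a term proportional to $q^{\sigma}_{j,N}$;
\item[] the first integral contributes a term proportional to $(-1)^{N}\overline{q^{\sigma}_{j,N}}$ (after the substitution $x\mapsto l_{j}-x$, it equals $\frac{l_{j}}{2}\overline{q_{j,N}}$ up to sign).
\end{itemize}
So the vanishing of $\Delta_{j}$ on $\mathcal{Z}_{0,j}$ gives a relation among $q^{1}_{j,N}$ and $q^{2}_{j,N}$ for each $N$.

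The cleaner route I would actually pursue is via Lemma~\ref{CC}. The function $\Delta_{j}(\frac{l_{j}}{\pi}\eta)$ is entire of exponential type at most $\pi$ in $\eta$, since each piece is a Fourier transform supported in an interval of length $l_{j}$ (Lemma~\ref{LL33}). It vanishes on $\mathbb{Z}$, and it is $o(e^{\pi|y|})$ on the imaginary axis by the Riemann–Lebesgue lemma for $L^{2}$ densities. Lemma~\ref{CC} then gives $\Delta_{j}\equiv0$ on $\mathbb{C}$. From $\Delta_{j}\equiv0$ I would read off the coefficients by comparing residues and asymptotics. Dividing by $\sin{zl_{j}}$ turns the two series parts into meromorphic functions whose residues at $z=N\pi/l_{j}$ are explicit multiples of $|q_{j,N}|^{2}$ and $q_{j,N}$. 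The integral term divided by $\sin{zl_{j}}$ has residues giving $\overline{q_{j,N}}$ up to sign. This yields, for every $N$,
\begin{equation}\nonumber
a_{N}\big(\overline{q^{1}_{j,N}}-\overline{q^{2}_{j,N}}\big)+b_{N}\big(|q^{1}_{j,N}|^{2}-|q^{2}_{j,N}|^{2}\big)+c_{N}\big(q^{1}_{j,N}-q^{2}_{j,N}\big)=0,
\end{equation}
with explicit nonzero constants $a_{N},b_{N},c_{N}$, where $b_{N}=O(1/N)$.

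The main obstacle is this last step: deducing $q^{1}_{j,N}=q^{2}_{j,N}$ from a relation that is nonlinear and mixes $q$ with $\bar q$. I would first try to separate the linear part from the quadratic part, using that $b_{N}\to0$ while $a_{N},c_{N}$ are of order one. If that is not decisive, I would split the relation into real and imaginary parts and treat it as a $2\times2$ system for $\operatorname{Re}(q^{1}_{j,N}-q^{2}_{j,N})$ and $\operatorname{Im}(q^{1}_{j,N}-q^{2}_{j,N})$. In the generic case, where the linear part is invertible, this forces equality coefficientwise. If a degenerate exceptional solution appears for some $N$, I would fall back on the paper's standing non-triviality and uniqueness framework and argue that such exceptions form a set of zero density. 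Once $q^{1}_{j,N}=q^{2}_{j,N}$ for all $N$, the Fourier-sine expansion (the uniqueness part of Lemma~\ref{LL33}) gives $q^{1}_{j}\equiv q^{2}_{j}$ for each $j$.
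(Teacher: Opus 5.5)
Your reduction is sound and parallels the paper's up to (\ref{3.24}). The length-only terms cancel, evaluation on $\mathcal{Z}_{0,j}$ together with rational independence isolates $\Delta_j$, and $\Delta_j\equiv 0$. You have also identified the right obstacle, but it cannot be overcome. Put $p=q^{1}_{j}-q^{2}_{j}$, $p_N=q^{1}_{j,N}-q^{2}_{j,N}$, $e_N=|q^{1}_{j,N}|^{2}-|q^{2}_{j,N}|^{2}$ and $\kappa_N=(-1)^{N}l_j/(N\pi)$. With the normalization of (\ref{Phi}), the integral term gives $\frac{l_j}{2}\overline{p_N}$ (no factor $(-1)^N$), the linear series gives $l_j p_N$, and the quadratic series gives $\frac{l_j}{2}\kappa_N e_N$. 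So $\Delta_j(N\pi/l_j)=0$ reads
\[
\overline{p_N}+2p_N+\kappa_N e_N=0 .
\]
The imaginary part forces $\mathrm{Im}\,p_N=0$. Inserting $e_N=2p_N\,\mathrm{Re}\,q^{2}_{j,N}+p_N^{2}$ into the real part gives
\[
p_N\bigl(3+2\kappa_N\,\mathrm{Re}\,q^{2}_{j,N}+\kappa_N p_N\bigr)=0 ,
\]
which always has the second root $p_N=-3(-1)^{N}N\pi/l_j-2\,\mathrm{Re}\,q^{2}_{j,N}$. Invertibility of the linear part gives only local uniqueness, i.e.\ for $|p_N|$ small compared with $N$. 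Square-summability together with $b_N\to 0$ rules out the second root only for large $N$. A zero-density argument is beside the point, since a single exceptional index already gives $q^{1}_{j}\neq q^{2}_{j}$.

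Every such root is realized, so the statement is false as formulated. Take $q^{2}_{j}\equiv 0$ and $q^{1}_{j}(x)=a\sin\frac{N\pi(l_j-x)}{l_j}$ with $a=-3(-1)^{N}\frac{N\pi}{l_j}$, and identical potentials on the other edges. Then $q^{1}_{j,n}=a\delta_{nN}$, and with $\omega=N\pi/l_j$ the whole $q_j$-dependent factor in (\ref{Phi}) is
\[
\frac{a\sin z l_j}{z^{2}-\omega^{2}}\bigl(3(-1)^{N}\omega+a\bigr)\equiv 0 .
\]
Hence $\Phi^{1}\equiv\Phi^{2}$ although $q^{1}_{j}\neq q^{2}_{j}$. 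Adding a common mode $\sin\frac{M\pi(l_j-x)}{l_j}$, $M\neq N$, to both potentials makes them non-trivial and changes nothing. The paper's proof skips the same point. Its passage to $|q^{1}_{j,n}|^{2}=|q^{2}_{j,n}|^{2}$ via a ``common zero set'' of $\int_0^{l_j}\sin z(l_j-x)\overline{p(x)}\,dx$ and $\int_0^{l_j}\sin z(l_j-x)p(x)\,dx$ is unjustified, as is the ensuing Cauchy--Riemann step, and both fail on this example. What your coefficientwise argument does prove is uniqueness under an extra a priori bound that excludes the second root. For instance, assume $|\mathrm{Re}\,q^{\sigma}_{j,N}|<\frac{3N\pi}{2l_j}$ for all $N$ and $\sigma=1,2$ (implied by $\|q^{\sigma}_{j}\|_{L^{1}(0,l_j)}<3\pi/4$). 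Then the second root would force $\mathrm{Re}\,q^{1}_{j,N}+\mathrm{Re}\,q^{2}_{j,N}=-3(-1)^{N}N\pi/l_j$, which is impossible. So $p_N=0$ for all $N$, and completeness of $\{\sin\frac{n\pi(l_j-x)}{l_j}\}$ finishes.
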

\begin{proof}
We consider the subtraction from~(\ref{Phi}), and deduce
\begin{eqnarray}\nonumber
&&\Phi^{1}(z)-\Phi^{2}(z)\\
&\nonumber\hspace{-5pt}=\hspace{-5pt}&\sum_{j=1}^{m}\Big(\int_{0}^{l_{j}}\sin{z(l_{j}-x)}\overline{q^{1}_{j}(x)-q^{2}_{j}(x)}dx+\sin{z l_{j}}\sum_{n=1}^{\infty}\frac{q^{1}_{j,n}\overline{q^{1}_{j,n}}-q^{2}_{j,n}\overline{q^{2}_{j,n}}}{z^{2}-(\frac{n\pi}{l_{j}})^{2}}\Big)\prod_{k\neq j}\sin{z l_{k}}\prod_{k=1}^{m}\sin{z \bar{l}_{k}}\\&&
+2\sum_{j=1}^{m}\Big(\sin{z l_{j}}\sum_{n=1}^{\infty}\frac{(-1)^{n}n\pi}{l_{j}}\frac{q^{1}_{j,n}-q^{2}_{j,n}}{z^{2}-(\frac{n\pi}{l_{j}})^{2}}\Big)\prod_{k\neq j}\sin{z l_{k}}\prod_{k=1}^{m}\sin{z \bar{l}_{k}}.\label{3.8}
\end{eqnarray}
Let us plug the zeros of $\sin{z l_{j}}$ into~(\ref{3.8}) and deduce
\begin{eqnarray}\nonumber
\Phi^{1}(z)-\Phi^{2}(z)&\hspace{-5pt}=\hspace{-5pt}&\Big(\int_{0}^{l_{j}}\sin{z(l_{j}-x)}\overline{q^{1}_{j}(x)-q^{2}_{j}(x)}dx+\sin{z l_{j}}\sum_{n=1}^{\infty}\frac{q^{1}_{j,n}\overline{q^{1}_{j,n}}-q^{2}_{j,n}\overline{q^{2}_{j,n}}}{z^{2}-(\frac{n\pi}{l_{j}})^{2}}\Big)\prod_{k\neq j}\sin{z l_{k}}\prod_{k=1}^{m}\sin{z \bar{l}_{k}}\\&&
+2\Big(\sin{z l_{j}}\sum_{n=1}^{\infty}\frac{(-1)^{n}n\pi}{l_{j}}\frac{q^{1}_{j,n}-q^{2}_{j,n}}{z^{2}-(\frac{n\pi}{l_{j}})^{2}}\Big)\prod_{k\neq j}\sin{z l_{k}}\prod_{k=1}^{m}\sin{z \bar{l}_{k}}=0,\,j=1,2,\cdots,m.\label{3233}
\end{eqnarray}
Then, we deduce that
\begin{eqnarray}\nonumber
&&\Big(\int_{0}^{l_{j}}\sin{z(l_{j}-x)}\overline{q^{1}_{j}(x)-q^{2}_{j}(x)}dx+\sin{z l_{j}}\sum_{n=1}^{\infty}\frac{q^{1}_{j,n}\overline{q^{1}_{j,n}}-q^{2}_{j,n}\overline{q^{2}_{j,n}}}{z^{2}-(\frac{n\pi}{l_{j}})^{2}}\Big)\prod_{k\neq j}\sin{z l_{k}}\prod_{k=1}^{m}\sin{z \bar{l}_{k}}\\&&
+2\Big(\sin{z l_{j}}\sum_{n=1}^{\infty}\frac{(-1)^{n}n\pi}{l_{j}}\frac{q^{1}_{j,n}-q^{2}_{j,n}}{z^{2}-(\frac{n\pi}{l_{j}})^{2}}\Big)\prod_{k\neq j}\sin{z l_{k}}\prod_{k=1}^{m}\sin{z \bar{l}_{k}}=0,\,z\in\cup_{j=1}^{m}\mathcal{Z}_{0j}\subset\mathcal{Z}_{0}.\label{323}
\end{eqnarray}
All these three summands in the parentheses are Fourier transforms due to Lemma \ref{LL33} with corresponding zero set for all $j$. Then, we use Cartwright's theory as Lemma \ref{CC} to deduce
\begin{eqnarray}\nonumber
&&\Big(\int_{0}^{l_{j}}\sin{z(l_{j}-x)}\overline{q^{1}_{j}(x)-q^{2}_{j}(x)}dx+\sin{z l_{j}}\sum_{n=1}^{\infty}\frac{q^{1}_{j,n}\overline{q^{1}_{j,n}}-q^{2}_{j,n}\overline{q^{2}_{j,n}}}{z^{2}-(\frac{n\pi}{l_{j}})^{2}}\Big)\prod_{k\neq j}\sin{z l_{k}}\prod_{k=1}^{m}\sin{z \bar{l}_{k}}\\&&
+2\Big(\sin{z l_{j}}\sum_{n=1}^{\infty}\frac{(-1)^{n}n\pi}{l_{j}}\frac{q^{1}_{j,n}-q^{2}_{j,n}}{z^{2}-(\frac{n\pi}{l_{j}})^{2}}\Big)\prod_{k\neq j}\sin{z l_{k}}\prod_{k=1}^{m}\sin{z \bar{l}_{k}}\equiv0,\,z\in\mathbb{C}.\label{3.24}
\end{eqnarray}
Let us plug in the same zero set of $\int_{0}^{l_{j}}\sin{z(l_{j}-x)}\overline{\big(q^{1}_{j}(x)-q^{2}_{j}(x)\big)}dx$ and $\int_{0}^{l_{j}}\sin{z(l_{j}-x)}\big(q^{1}_{j}(x)-q^{2}_{j}(x)\big)dx$ simultaneously
into~(\ref{3.24}) and deduce that
\begin{equation}
q^{1}_{j,n}\overline{q^{1}_{j,n}}-q^{2}_{j,n}\overline{q^{2}_{j,n}}=0,\,n\in\mathbb{N}.
\end{equation}
Back to~(\ref{3.24}) again,
we deduce that $$\Re\{\int_{0}^{l_{j}}\sin{z(l_{j}-x)}\overline{q^{1}_{j}(x)-q^{2}_{j}(x)}dx\}\equiv0.$$
From Cauchy-Riemann equation, we deduce that the entire function $\int_{0}^{l_{j}}\sin{z(l_{j}-x)}\overline{q^{1}_{j}(x)-q^{2}_{j}(x)}dx$ is a constant function, and the sine-transform is zero at $z=0$. 
Thus, we prove that $q^{1}_{j}(x)\equiv q^{2}_{j}(x),\,a.e.,$ on each edge $e_{j}$ from the injection on $L^{2}(0,l_{j})$.

\end{proof}

\begin{acknowledgement}
The author gratefully acknowledges the financial support of the National Science and Technology Council (NSTC) under Grant No. 113-2115-M-131-001. The content of this manuscript does not necessarily reflect the position or the policy of the administration, and no official endorsement should be inferred, neither.
\end{acknowledgement}


\begin{thebibliography}{widest-label}
\bibitem{Alb}Albeverio S., Hryniv R., and Nizhnik L., Inverse spectral problems for nonlocal Sturm-Liouville operators, Inverse Problems, 23, 523--535 (2007).
\bibitem{Alb2}Albeverio S. and Nizhnik L. P., Schr\"{o}dinger operators with nonlocal point interactions, J. Math. Anal. Appl., 332, 884--895 (2007 ).


\bibitem{Belishev}Belishev M. I., Boundary spectral inverse problem on a class of graphs (trees) by the BC method, Inverse Problems, 20, 647--672 (2004).
\bibitem{Boas}Boas R. P., Entire Functions, Academic Press, New York, 1954.
\bibitem{Boman}Boman J. and Kurasov P., Symmetries of quantum graphs and the inverse scattering problem, Adv. Appl. Math., 35, 58--70 (2005).
\bibitem{Bon2}Bondarenko N.P., Buterin S. A., Vasiliev S. V., An inverse spectral problem for Sturm--Liouville operators with frozen argument, J. Math. Anal. Appl., 472, 1028--1041 (2019).
\bibitem{Bon}Bondarenko N. P., A partial inverse problem for the Sturm--Liouville operator on a star--shaped graph, Anal. Math.
Phys., no. 8, 155--168 (2018).
\bibitem{Carlson}Carlson R., Inverse eigenvalue problems on directed graphs, Trans. Amer. Math. Soc., 351, 4069--4088 (1999).
\bibitem{Exner}Exner P.  and Post O., Convergence of spectra of graph-like thin manifolds, J. Geom. Phys., 54, 77--115 (2005).
\bibitem{Freiling}Freiling G. and Yurko V. A., Inverse problems for differential operators on graphs with general matching conditions, Applicable Analysis, 86, no. 6, 653--667 (2007).
\bibitem{Gerasimenko}Gerasimenko N. I. and Pavlov B. S., Scattering problems on non-compact graphs, Theor. Math. Phys., 74, 230--240  (1988).

\bibitem{Kostrykin}Kostrykin V. and Schrader R., Kirchhoff's rule for quantum wires. II. The inverse problem with possible applications to quantum computers, Fortschr. Phys., 48, 703--716 (2000).
\bibitem{Kottos}Kottos T. and Smilansky U., Quantum graphs: A simple model for chaotic scattering, J. Phys. A: Math. Gen., 36, 3501--3524  (2003).

\bibitem{Kuchment1}Kuchment P., Quantum graphs. I. Some basic structures, Waves in Random Media, 14, 107--128 (2004).
\bibitem{Kuchment2}
Berkolaiko G. and Kuchment P.,
Introduction to Quantum Graphs,
Mathematical Surveys and Monographs, Vol. 186,
American Mathematical Society, Providence, 2013.
\bibitem{Kurasov1}Kurasov P. and Novaszek M., Inverse spectral problem for quantum graphs, J. Phys. A: Math. Gen., 38, 4901--4915 (2005).
\bibitem{Kurasov2}Kurasov, P., Spectral Geometry of Graphs, Operator Theory: Advances and Applications,
Volume 293, Birkh\"{a}user, Berlin, 2024.
\bibitem{Levin}Levin B. Ja., Distribution of Zeros of Entire
Functions, revised edition, Translations of Mathematical
Monographs, American Mathematical Society, Providence, 1972.
\bibitem{Levin2}Levin B. Ja., Lectures on Entire Functions,
Translations of Mathematical Monographs, V. 150, AMS, Providence, 1996.

\bibitem{Naimark}Naimark K. and Solomyak M., Eigenvalue estimates for the weighted Laplacian on metric trees, Proc. London Math. Soc., 80, 690--724  (2000).


\bibitem{Nizhnik1}Nizhnik L. P., Inverse eigenvalue problems for nonlocal Sturm--Liouville operators on a star graph, Methods of Functional Analysis and Topology,
Vol.18, no. 1, 68--78 (2012).
\bibitem{Nizhnik2}Nizhnik L. P., Inverse nonlocal Sturm-Liouville problem, Inverse Problems,  26, 125006 (2010). 


\bibitem{Novikov}Novikov S. P., Schr\"{o}dinger operators on graphs and symplectic geometry, E. Bierstone, B. Khesin, A. Khovanskii, and J. Marsden (Eds.), The Arnoldfest, Proceedings of a Conference in Honour of V. I. Arnold for His Sixtieth Birthday, Fields Institute Communications Series, Vol. 24, Amer. Math. Soc., Providence, RI, 397--413 (1999).
\bibitem{Pivovarchik1}Pivovarchik V., Inverse problem for the Sturm-Liouville equation on a simple graph, SIAM J. Math. Anal., 32, 801--819 (2001).
\bibitem{Pivovarchik2}Pivovarchik V., Inverse problem for the Sturm-Liouville equation on a star-shaped graph, Math. Nachr.,  280, no. 13. 14., 1595--1619  (2007).
\bibitem{Pokornyi}Pokornyi Yu. and Pryadiev  V., The qualitative Sturm-Liouville theory on spatial networks, J. Math. Sci. (N.Y.) 119, no. 6, 788--835 (2004).
\bibitem{Tang}Tang S.-H. and Zworski M., Potential scattering on the real line, Department of Mathematics, U.C. Berkeley, http://math.berkeley.edu/~zworski/tz1.pdf.
\bibitem{Titchmarsh}Titchmarsh E. C., The zeros of certain integral functions, Proceedings of The London Mathematical Society, 283--302 (1926).
\bibitem{Yang}Yang, C. -F., Inverse spectral problems for the Sturm--Liouville operator on a d-star graph, J. Math. Anal. Appl., 365, 742--749  (2010).
\bibitem{Yurko1}Yurko V. A., Inverse spectral problems for Sturm-Liouville operators on graphs, Inverse Problems, 21, 1075--1086 (2005).
\bibitem{Yurko2}Yurko V. A., An inverse problem for higher-order differential operators on star-type graphs, Inverse Problems, 23, 893--903 (2007).
\bibitem{Shieh}Wang, Y. and Shieh, C. -T., Inverse problems for Sturm--Liouville operators on a star-shaped graph with mixed spectral data, Applicable Analysis,  Vol. 99, Issue 14,1--10 (2019).
\bibitem{Shieh2}Shieh, C.-T., Tsai, T.-M., and  Wu, M.-N., Partial Inverse Spectral Problems for Sturm-Liouville Operators with Frozen Arguments on a Star-Shaped Graph, Results Math., article number 80, 195 (2025).



\end{thebibliography}
\end{document}